\documentclass{article}

\usepackage{epsfig}
\usepackage{amsfonts}
\usepackage{amssymb}
\usepackage{amsmath}
\usepackage{amsthm}
\usepackage{latexsym}
\usepackage{subfigure}
\usepackage{verbatim}
\usepackage{bbm}
\usepackage{comment}
\usepackage{url}
\usepackage{multicol}
\usepackage{array}
\usepackage{color}
\usepackage{natbib}
\catcode`\@=11%

\usepackage{natbib}

\def\independenT#1#2{\mathrel{\setbox0\hbox{$#1#2$}%
\copy0\kern-\wd0\mkern4mu\box0}}

\newtheorem{theorem}{Theorem}
\newtheorem{proposition}[theorem]{Proposition}
\newtheorem{lemma}[theorem]{Lemma}

\newcommand{\app}[1]{Appendix~\ref{app:#1}}

\newcommand{\eq}[1]{Eq.~(\ref{eq:#1})}
\newcommand{\eqs}[1]{Eqs.~(\ref{eq:#1})}
\newcommand{\eqss}[1]{(\ref{eq:#1})}
\newcommand{\fig}[1]{Figure~\ref{fig:#1}}
\newcommand{\figs}[1]{Figures~\ref{fig:#1}}
\newcommand{\figss}[1]{\ref{fig:#1}}
\newcommand{\lem}[1]{Lemma~\ref{lem:#1}}
\newcommand{\lems}[1]{Lemmas~\ref{lem:#1}}
\newcommand{\lemss}[1]{\ref{lem:#1}}
\newcommand{\prop}[1]{Proposition~\ref{prop:#1}}
\newcommand{\mysec}[1]{Section~\ref{sec:#1}}
\newcommand{\thm}[1]{Theorem~\ref{thm:#1}}


\newcommand{\bep}{\textrm{BeP}}
\newcommand{\bern}{\textrm{Bern}}
\newcommand{\tb}{\textrm{Beta}}
\newcommand{\bp}{\textrm{BP}}
\newcommand{\bpbep}{\textrm{BP-BeP}}
\newcommand{\tdp}{\textrm{DP}}
\newcommand{\ga}{\textrm{Ga}}

\newcommand{\pois}{\textrm{Pois}}
\newcommand{\py}{\textrm{PY}}

\newcommand{\mbe}{\mathbb{E}}
\newcommand{\mbp}{\mathbb{P}}

\allowdisplaybreaks
\setlength\columnseprule{.4pt}

\title{Beta processes, stick-breaking, and power laws}
\author{Tamara Broderick\footnote{Corresponding author: \texttt{tab@stat.berkeley.edu}}, Michael I. Jordan, Jim Pitman}
\date{} 

\begin{document}
\maketitle

\begin{abstract}
The beta-Bernoulli process provides a Bayesian nonparametric prior
for models involving collections of binary-valued features.  
A draw from the beta process yields an infinite collection of
probabilities in the unit interval, and a draw from the Bernoulli
process turns these into binary-valued features.  Recent work has
provided stick-breaking representations for the beta
process analogous to the well-known stick-breaking
representation for the Dirichlet process.  We derive
one such stick-breaking representation directly from the characterization
of the beta process as a completely random measure.  
This approach motivates a three-parameter generalization of the
beta process, and we study the power laws that can be obtained from
this generalized beta process.  We present a posterior inference
algorithm for the beta-Bernoulli process that exploits the stick-breaking
representation, and we present experimental results for a discrete
factor-analysis model.
\end{abstract}

\section{Introduction}

Large data sets are often heterogeneous, arising as amalgams from 
underlying sub-populations.  The analysis of large data sets thus
often involves some form of stratification in which groupings are identified 
that are more homogeneous than the original data.  While this can sometimes be 
done on the basis of explicit covariates, it is also commonly the case that 
the groupings are captured via discrete latent variables that are to be 
inferred as part of the analysis.  Within a Bayesian framework, there 
are two widely employed modeling motifs for problems of this kind.  The 
first is the \emph{Dirichlet-multinomial motif}, which is based on the
assumption that there are $K$ ``clusters'' that are assumed to be mutually 
exclusive and exhaustive, such that allocations of data to clusters can be 
modeled via a multinomial random variable whose parameter vector is 
drawn from a Dirichlet distribution.  A second motif is the 
\emph{beta-Bernoulli motif}, where a collection of $K$ binary 
``features'' are used to describe the data, and where each feature 
is modeled as a Bernoulli random variable whose parameter is obtained 
from a beta distibution.  The latter motif can be converted to the
former in principle---we can view particular patterns of ones and 
zeros as defining a cluster, thus obtaining $M = 2^K$ clusters in total.
But in practice models based on the Dirichlet-multinomial motif 
typically require $O(M)$ additional parameters in the likelihood, 
whereas those based on the beta-Bernoulli motif typically require 
only $O(K)$ additional parameters.  Thus, if the combinatorial structure 
encoded by the binary features captures real structure in the data,
then the beta-Bernoulli motif can make more efficient usage of its
parameters.

The Dirichlet-multinomial motif can be extended to a stochastic 
process known as the \emph{Dirichlet process}.  A draw from a Dirichlet 
process is a random probability measure that can be represented as 
follows~\citep{mccloskey:1965:model,patil:1977:diversity,
ferguson:1973:bayesian,sethuraman:1994:constructive}:
\begin{equation}
G = \sum_{i=1}^{\infty} \pi_{i} \delta_{\psi_{i}},
\label{eq:DP}
\end{equation}
where $\delta_{\psi_{i}}$ represents an atomic measure at location
$\psi_i$, where both the $\{\pi_{i}\}$ and the $\{\psi_{i}\}$ are random,
and where the $\{\pi_{i}\}$ are nonnegative and sum to one (with probability 
one).  Conditioning on $G$ and drawing $N$ values independently from 
$G$ yields a collection of $K$ distinct values, where $K \leq N$ is 
random and grows (in expectation) at rate $O(\log N)$.  Treating 
these distinct values as indices of clusters, we obtain a model in 
which the number of clusters is random and subject to posterior inference.

A great deal is known about the Dirichlet process---there are 
direct connections between properties of $G$ as a random measure 
(e.g., it can be obtained from a Poisson point process), properties 
of the sequence of values $\{\pi_i\}$ (they can be obtained from a 
``stick-breaking process''), and properties of the collection of distinct values 
obtained by sampling from $G$ (they are characterized by a stochastic 
process known as the \emph{Chinese restaurant process}).  These connections 
have helped to place the Dirichlet process at the center of Bayesian 
nonparametrics, driving the development of a wide variety of inference 
algorithms for models based on Dirichlet process priors and suggesting 
a range of generalizations~\citep[e.g.][]{maceachern:1999:dependent,
ishwaran:2001:gibbs,walker:2007:sampling,kalli:2010:slice}.  

It is also possible to extend the beta-Bernoulli motif to a Bayesian 
nonparametric framework, and there is a growing literature on this topic.
The underlying stochastic process is the \emph{beta process}, which is 
an instance of a family of random measures known as \emph{completely
random measures}~\citep{kingman:1967:completely}.  The beta process was first studied 
in the context of survival analysis by~\citet{hjort:1990:nonparametric},
where the focus is on modeling hazard functions via the random cumulative 
distribution function obtained by integrating the beta process.  
\citet{thibaux:2007:hierarchical} focused instead on the beta process 
realization itself, which can be represented as
\[
G = \sum_{i=1}^{\infty} q_{i} \delta_{\psi_{i}},
\]
where both the $q_{i}$ and the $\psi_{i}$ are random and where the $q_{i}$ 
are contained in the interval $(0,1)$.  This random measure can be viewed as
furnishing an infinite collection of coins, which, when tossed repeatedly,
yield a binary featural description of a set of entities in which the 
number of features with non-zero values is random.  Thus, the resulting 
\emph{beta-Bernoulli process} can be viewed as an infinite-dimensional version 
of the beta-Bernoulli motif.  Indeed, \citet{thibaux:2007:hierarchical} 
showed that by integrating out the random $q_{i}$ and $\psi_{i}$ one 
obtains---by analogy to the derivation of the Chinese restaurant process 
from the Dirichlet process---a combinatorial stochastic process known as 
the \emph{Indian buffet process}, previously studied by~\citet{griffiths:2006:infinite}, 
who derived it via a limiting process involving random binary matrices
obtained by sampling finite collections of beta-Bernoulli variables.

Stick-breaking representations of the Dirichlet process have been 
particularly important both for algorithmic development and for 
exploring generalizations of the Dirichlet process.  These
representations yield explicit recursive formulas for obtaining 
the weights $\{\pi_i\}$ in \eq{DP}.  In the case of the beta
process, explicit non-recursive representations can be obtained 
for the weights $\{q_i\}$, based on size-biased sampling~\citep{thibaux:2007:hierarchical}
and inverse L\'evy measure~\citep{wolpert:2004:reflecting,teh:2007:stick}.  
Recent work has also yielded recursive constructions that are 
more closely related to the stick-breaking representation of the 
Dirichlet process~\citep{teh:2007:stick,paisley:2010:stick}.  

Stick-breaking representations of the Dirichlet process permit
ready generalizations to stochastic processes that yield power-law
behavior (which the Dirichlet process does not), notably the
Pitman-Yor process~\citep{ishwaran:2001:gibbs,pitman:2006:combinatorial}.  
Power-law generalizations of the beta process have also been 
studied~\citep{teh:2009:indian} and stick-breaking-like representations 
derived.  These latter representations are, however, based on the 
non-recursive sized-biased sampling and inverse-L\'evy methods rather 
than the recursive representations of~\citet{teh:2007:stick}
and~\citet{paisley:2010:stick}.

\citet{teh:2007:stick} and~\citet{paisley:2010:stick} derived their 
stick-breaking representations of the beta process as limiting processes, 
making use of the derivation of the Indian buffet process 
by~\citet{griffiths:2006:infinite} as a limit of finite-dimensional
random matrices.  
In the current paper we show how to derive stick-breaking 
for the beta process directly from the underlying random 
measure.  This approach not only has the advantage of conceptual 
clarity (our derivation is elementary), but it also permits a 
unified perspective on various generalizations of the beta process 
that yield power-law behavior.\footnote{A similar measure-theoretic
derivation has been presented recently by~\citet{paisley:2011:stick}, 
who focus on applications to truncations of the beta process.}
We show in particular that it yields a power-law generalization
of the stick-breaking representation of~\citet{paisley:2010:stick}.

To illustrate our results in the context of a concrete application,
we study a discrete factor analysis model previously considered 
by~\citet{griffiths:2006:infinite} and \citet{paisley:2010:stick}.  
The model is of the form
\begin{equation} 
	\label{eq:factor}
	X = Z \Phi + E,
\end{equation}
where $X \in \mathbb{R}^{N \times P}$ is the data and $E \in \mathbb{R}^{N \times P}$
is an error matrix.  The matrix $\Phi \in \mathbb{R}^{K \times P}$ is a matrix
of factors, and $Z \in \mathbb{R}^{N \times K}$ is a binary matrix of factor
loadings.  The dimension $K$ is infinite, and thus the rows of $\Phi$ comprise 
an infinite collection of factors.  The matrix $Z$ is obtained 
via a draw from a beta-Bernoulli process; its $n$th row is an infinite binary vector 
of features (i.e., factor loadings) encoding which of the infinite collection
of factors are used in modeling the $n$th data point.

The remainder of the paper is organized as follows. We introduce the beta process, 
and its conjugate measure the Bernoulli process, in \mysec{model}. In order to 
consider stick-breaking and power law behavior in the beta-Bernoulli framework, 
we first review stick-breaking for the Dirichlet process in \mysec{dp_stick} 
and power laws in clustering models in \mysec{clustering_power}. We consider 
potential power laws that might exist in featural models in \mysec{power_mult}. 
Our main theoretical results come in the following two sections. First, 
in \mysec{bp_stick}, we provide a proof that the stick-breaking 
representation of~\citet{paisley:2010:stick}, expanded to include a third 
parameter, holds for a three-parameter extension of the beta process.
Our proof takes a measure-theoretic approach based on 
a Poisson process.  We then make use of the Poisson process framework to 
establish asymptotic power laws, with exact constants, for the three-parameter 
beta process in \mysec{types_i_ii_proofs}.
We also show, in \mysec{exp_proof}, that there are aspects of 
the beta-Bernoulli framework that cannot exhibit a power law.  We illustrate 
the asymptotic power laws on a simulated data set in \mysec{simulation}.
We present experimental results in 
\mysec{experiments}, and we present an MCMC algorithm for posterior inference
in \app{inference}. 

\section{The beta process and the Bernoulli process} \label{sec:model}

The beta process and the Bernoulli process are instances of the general 
family of random measures known as \emph{completely random 
measures}~\citep{kingman:1967:completely}.  A completely random 
measure $H$ on a probability space $(\Psi, {\cal S})$ is a random 
measure such that, for any disjoint measurable sets $A_{1},\ldots,A_{n} \in {\cal S}$, 
the random variables $H(A_{1}),\ldots,H(A_{n})$ are independent.

Completely random measures
can be obtained from an underlying 
Poisson point process.
Let $\nu(d\psi, du)$ denote a $\sigma$-finite measure\footnote{
The measure $\nu$ need not necessarily be $\sigma$-finite
to generate a completely random measure though we 
consider only $\sigma$-finite measures in this work.} 
on the product space $\Psi \times \mathbb{R}$.  Draw a realization from 
a Poisson point process with rate measure $\nu(d\psi, du)$.
This yields a set of points $\Pi = \{(\psi_{i},U_{i})\}_{i}$, 
where the index $i$ may range over a countable infinity.  Finally, 
construct a random measure as follows:
\begin{equation}
	\label{eq:crm_draw}
	B = \sum_{i=1}^{\infty} U_{i} \delta_{\psi_{i}},
\end{equation}
where $\delta_{\psi_{i}}$ denotes an atom at $\psi_i$.
This discrete random measure is such that for any measurable
set $T \in {\cal S}$,
$$
	B(T) = \sum_{i: \psi_{i} \in T} U_{i}.
$$
That $B$ is completely random follows from the Poisson point process 
construction.

In addition to the representation
obtained from a Poisson process, completely random measures may
include a deterministic measure and a set of atoms at fixed
locations. The component of the completely random measure generated from a
Poisson point process as described above is called the {\em ordinary
component}. As shown by~\citet{kingman:1967:completely}, 
completely random measures are essentially characterized by 
this representation. An example is shown in \fig{beta}.

\begin{figure}
\begin{center}
\includegraphics[width=0.45\textwidth]{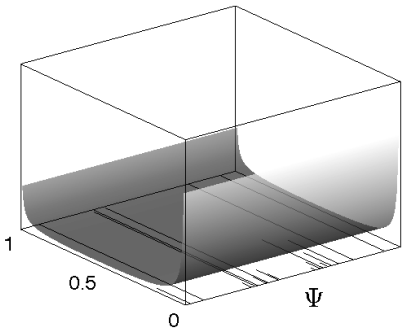}
\end{center}
\caption{\label{fig:beta}
The gray surface illustrates the rate density
in \eq{beta_density} corresponding to the beta process.  
The base measure $B_{0}$ is taken to be uniform on $\Psi$.  
The non-zero endpoints of the line segments plotted below the 
surface are a particular realization of the Poisson process, 
and the line segments themselves represent a realization of
the beta process.}
\end{figure}

The \emph{beta process}, denoted $B \sim \bp(\theta, B_{0})$, 
is an example of a completely random measure.  As long as the 
\emph{base measure} $B_{0}$ is continuous, which is our assumption 
here, $B$ has only an ordinary component with rate measure
\begin{equation}
	\label{eq:beta_density}
	\nu_{\bp}(d\psi, du) = \theta(\psi) u^{-1} (1-u)^{\theta(\psi)-1} \; du \; B_{0}(d \psi), \quad \psi \in \Psi, u \in [0,1],
\end{equation}
where $\theta$ is a positive function on $\Psi$.
The function $\theta$
is called the {\em concentration function}~\citep{hjort:1990:nonparametric}.
In the remainder we follow~\citet{thibaux:2007:hierarchical} in taking $\theta$ 
to be a real-valued constant and refer to it as the {\em concentration parameter}.  
We assume $B_{0}$ is nonnegative and fixed.
The total mass of $B_{0}$, $\gamma := B_{0}(\Psi)$, is 
called the {\em mass parameter}.  We assume $\gamma$ is strictly positive 
and finite.
The density in \eq{beta_density}, with the choice of $B_{0}$ uniform 
over $[0,1]$, is illustrated in \fig{beta}.

\begin{figure}
\begin{center}
\renewcommand{\tabcolsep}{0cm}
\begin{tabular}{>{\centering}m{0.45\textwidth} >{\centering}m{0.45\textwidth}}
	\includegraphics[width=0.4\textwidth]{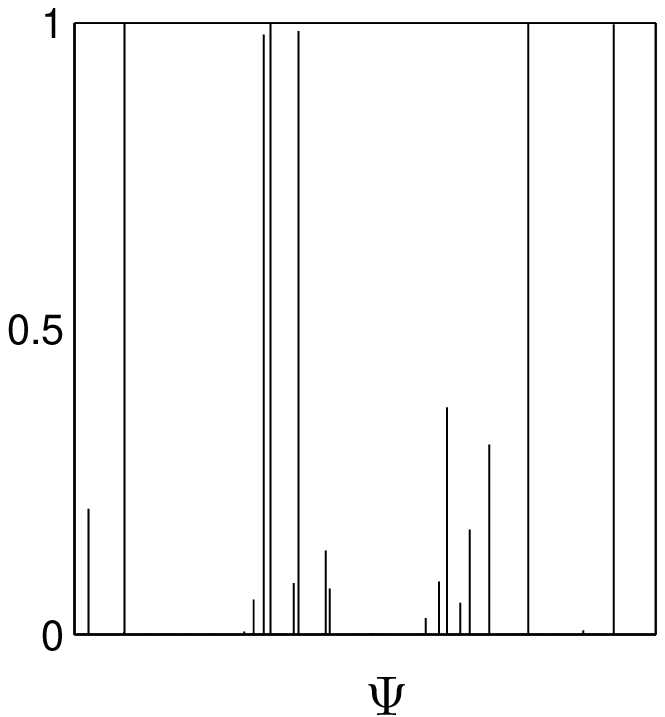}
	& \\
	\includegraphics[width=0.4\textwidth]{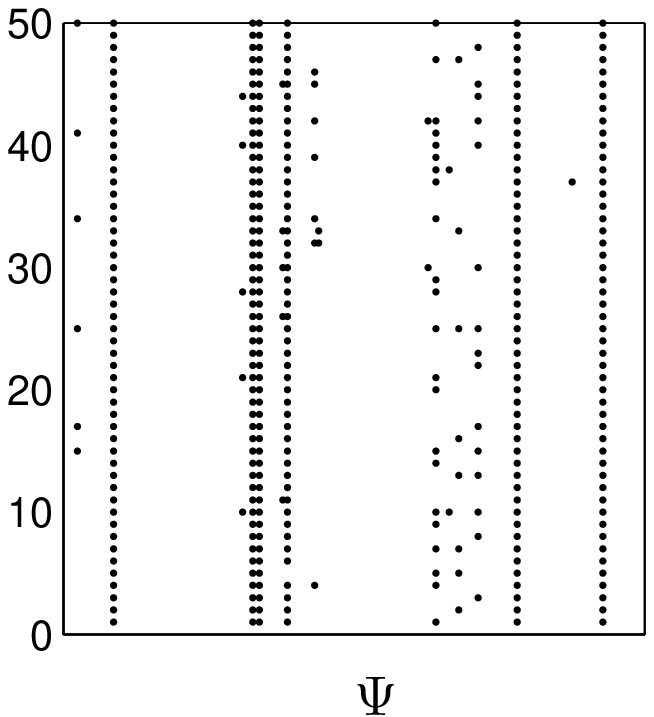}
	& \includegraphics[height=0.4\textwidth]{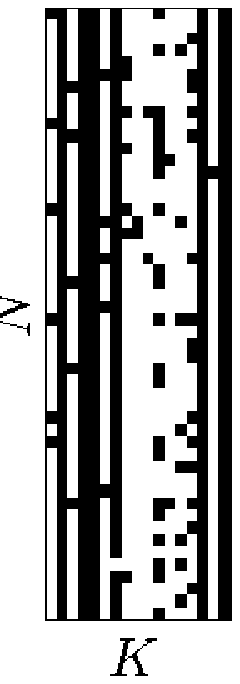}
\end{tabular}
\end{center}
\caption{\label{fig:bernoulli} {\em Upper left}: A draw $B$ from the beta process.  
{\em Lower left}: 50 draws from the Bernoulli process $BeP(B)$.  The vertical 
axis indexes the draw number among the 50 exchangeable draws. A point indicates 
a one at the corresponding location on the horizontal axis, $\psi \in \Psi$.
{\em Right}: We can form a matrix from the lower left plot by including 
only those $\psi$ values with a non-zero number of Bernoulli successes 
among the 50 draws from the Bernoulli process. Then, the number of columns 
$K$ is the number of such $\psi$, and the number of rows $N$ is the 
number of draws made.  A black square indicates a one at the corresponding 
matrix position; a white square indicates a zero.}
\end{figure}

The beta process can be viewed as providing an infinite collection of
coin-tossing probabilities.  Tossing these coins corresponds to a draw
from the \emph{Bernoulli process}, yielding an infinite binary vector
that we will treat as a latent feature vector.

More formally, a \emph{Bernoulli process} $Y \sim BeP(B)$ is a completely
random measure with potentially both fixed atomic and ordinary components.
In defining the Bernoulli process we consider only the case in which $B$ 
is discrete, i.e., of the form in \eq{crm_draw}, though not necessarily 
a beta process draw or even random for the moment.  Then $Y$ has only a 
fixed atomic component and has the form
\begin{equation}
	\label{eq:bep_discr}
	Y = \sum_{i=1}^{\infty} b_{i} \delta_{\psi_{i}},
\end{equation}
where $b_{i} \sim \bern(u_{i})$ for $u_{i}$ the corresponding atomic mass in the measure $B$.
We can see that $\mbe(Y|B) = B(\Psi)$ from the mean of the Bernoulli distribution,
so the number of non-zero points in any realization of the Bernoulli process is 
finite when $B$ is a finite measure.

We can link the beta process and $N$ Bernoulli process draws to generate a random feature
matrix $Z$.  To that end, first draw $B \sim \bp(\theta,B_{0})$ for fixed 
hyperparameters $\theta$ and $B_{0}$ and then draw $Y_{n} \stackrel{iid}{\sim} \bep(B)$ 
for $n \in \{1,\ldots,N\}$.  Note that since $B$ is discrete, 
each $Y_{n}$ will be discrete as in \eq{bep_discr}, with point masses 
only at the atoms $\{\psi_{i}\}$ of the beta process $B$.  Note also that
$\mbe B(\Psi) = \gamma < \infty$, so $B$ is a finite measure, and it 
follows that the number of non-zero point masses in any draw $Y_{n}$ 
from the Bernoulli process will be finite.  Therefore, the total number 
of non-zero point masses $K$ across $N$ such Bernoulli process draws is finite.

Now reorder the $\{\psi_{i}\}$ so that the first $K$ are exactly those 
locations where some Bernoulli process in $\{Y_{n}\}_{n=1}^{N}$ has a 
non-zero point mass.  We can form a matrix $Z \in \{0,1\}^{N \times K}$ 
as a function of the $\{Y_{n}\}_{n=1}^{N}$ by letting the $(n,k)$ 
entry equal one when $Y_{n}$ has a non-zero point mass at $\psi_{k}$ and zero 
otherwise.  If we wish to think of $Z$ as having an infinite number of 
columns, the remaining columns represent the point masses of the 
$\{Y_{n}\}_{n=1}^{N}$ at $\{\psi_{k}\}_{k > K}$, which we know to 
be zero by construction.  We refer to the overall procedure of drawing 
$Z$ according to, first, a beta process and then repeated Bernoulli 
process draws in this way as a \emph{beta-Bernoulli process}, and we 
write $Z \sim \bpbep(N,\gamma,\theta)$.  Note that we have
implicitly integrated out the $\{\psi_{k}\}$,
and the distribution of the matrix $Z$
depends on $B_{0}$ only through its total mass, $\gamma$.
As shown by~\citet{thibaux:2007:hierarchical},
this process yields the same distribution on row-exchangeable, infinite-column 
matrices as the Indian buffet process~\citep{griffiths:2006:infinite}, which
describes a stochastic process directly on (equivalence classes of) binary matrices.
That is, the Indian buffet process is obtained as an exchangeable distribution on 
binary matrices when the underlying beta process measure is integrated out.
This result is analogous to the derivation of the Chinese restaurant process as the 
exchangeable distribution on partitions obtained when the underlying Dirichlet 
process is integrated out.  The beta-Bernoulli process is illustrated in \fig{bernoulli}.

\section{Stick-breaking for the Dirichlet process} \label{sec:dp_stick}

\begin{figure}
\begin{center}
\resizebox{0.7\textwidth}{!}{\input{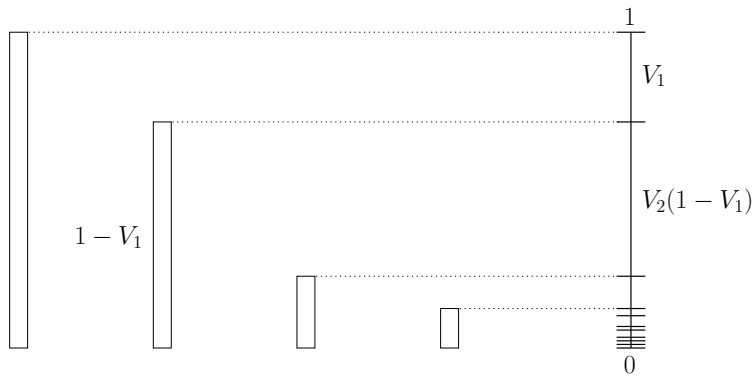}}
\end{center}
\caption{\label{fig:stick} A stick-breaking process
starts with the unit interval ({\em far left}). First, a
random fraction $V_{1}$ of the unit interval is broken
off; the remaining stick has length $1-V_{1}$
({\em middle left}). Next, a random fraction $V_{2}$
of the remaining stick is broken off, i.e., a fragment
of size $V_{2}(1-V_{1})$; the remaining stick has
length $(1-V_{1})(1-V_{2})$. This process proceeds
recursively and generates stick fragments
$V_{1},V_{2}(1-V_{1}),\ldots,V_{i}\prod_{j < i} (1-V_{j}), \ldots$.
These fragments form a random partition of the
unit interval ({\em far right}).}
\end{figure}

The stick-breaking representation of the Dirichlet 
process~\citep{mccloskey:1965:model,
patil:1977:diversity,sethuraman:1994:constructive}
provides a simple recursive
procedure for obtaining the weights $\{\pi_i\}$ in \eq{DP}.
This procedure provides an explicit representation of a draw $G$ from the 
Dirichlet process, one which can be usefully instantiated and updated in 
posterior inference algorithms~\citep{ishwaran:2001:gibbs,blei:2006:variational}. 
We begin this section by reviewing this stick-breaking construction 
as well as some of the extensions to this construction that yield 
power-law behavior.  We then turn to a consideration of stick-breaking 
and power laws in the setting of the beta process.

Stick-breaking is the process of recursively breaking off random fractions 
of the unit interval.  In particular, let $V_{1},V_{2},\ldots$ be some 
countable sequence of random variables, each with range $[0,1]$. Each 
$V_{i}$ represents the fraction of the remaining stick to break off at 
step $i$.  Thus, the first stick length generated by the stick-breaking 
process is $V_{1}$.  At this point, a fragment of length $1-V_{1}$ of the 
original stick remains.  Breaking off $V_{2}$ fraction of the remaining stick 
yields a second stick fragment of $V_{2}(1-V_{1})$.  This process iterates such that 
the stick length broken off at step $i$ is $V_{i} \prod_{j < i} (1-V_{j})$. 
The stick-breaking recursion is illustrated in \fig{stick}.

The Dirichlet process 
arises from the special case in which the $V_{i}$ are independent draws
from the $\tb(1,\theta)$ distribution~\citep{mccloskey:1965:model,
patil:1977:diversity,sethuraman:1994:constructive}.  Thus we have the following 
representation of a draw $G \sim \tdp(\theta, G_{0})$:
\begin{eqnarray}
	\nonumber
	G &=& \sum_{i=1}^{\infty} \left[ V_{i} \prod_{j=1}^{i-1} (1-V_{j}) \right] \delta_{\psi_{i}} \\
	\nonumber
	V_{i} &\stackrel{iid}{\sim}& \tb(1,\theta) \\
	\label{eq:dp_stick}
	\psi_{i} &\stackrel{iid}{\sim}& G_{0},
\end{eqnarray}
where $G_0$ is referred to as the \emph{base measure} and $\theta$
is referred to as the \emph{concentration parameter}.

\section{Power law behavior} \label{sec:power_law}

Consider the process of sampling a random measure $G$ from a Dirichlet 
process and subsequently drawing independently $N$ times from $G$.  
The number of unique atoms sampled according to this process will
grow as a function of $N$.  The growth associated with the Dirichlet
process is relatively slow, however, and when the Dirichlet process
is used as a prior in a clustering model one does not obtain the 
heavy-tailed behavior commonly referred to as a ``power law.''  In 
this section we first provide a brief exposition of the different kinds 
of power law that we might wish to obtain in a clustering model and 
discuss how these laws can be obtained via an extension of the 
stick-breaking representation.  We then discuss analogous laws for 
featural models.

\subsection{Power laws in clustering models} \label{sec:clustering_power}

First, we establish some notation.  Given a number $N$ of draws from 
a discrete random probability measure $G$ (where $G$ is not necessarily 
a draw from the Dirichlet process), let $(N_{1},N_{2},\ldots)$
denote the sequence of counts associated with the unique values
obtained among the $N$ draws, where we view these unique values
as ``clusters.''  Let
\begin{equation}
	\label{eq:frac_groups_j}
	K_{N,j} = \sum_{i=1}^{\infty} \mathbbm{1}(N_{i} = j),
\end{equation}
and let
\begin{equation}
	\label{eq:num_groups}
	K_{N} = \sum_{i=1}^{\infty} \mathbbm{1}(N_{i} > 0).
\end{equation}
That is, $K_{N,j}$ is the number of clusters that are drawn exactly
$j$ times, and $K_{N}$ is the total number of clusters.

There are two types of power-law behavior that a clustering model might 
exhibit.  First, there is the type of power law behavior reminiscent of 
Heaps' law~\citep{heaps:1978:information, gnedin:2007:notes}:
\begin{equation}
	\label{eq:heaps_law}
	K_{N} \stackrel{a.s.}{\sim} c N^{a}, \quad N \rightarrow \infty
\end{equation}
for some constants $c > 0, a \in (0,1)$.  Here, $\sim$ means that the 
limit of the ratio of the left-hand and right-hand side, when they are both 
real-valued and non-random, is one as the number of data points $N$ 
grows large.  We denote a power law in the form of \eq{heaps_law} as {\em Type I}.
Second, there is the type of power law behavior reminiscent 
of Zipf's law~\citep{zipf:1949:human,gnedin:2007:notes}:
\begin{equation}
	\label{eq:zipf_law}
	K_{N,j} \stackrel{a.s.}{\sim} \frac{a \Gamma(j-a)}{j! \Gamma(1-a)} c N^{a} \quad N \rightarrow \infty
\end{equation}
again for some constants $c > 0, a \in (0,1)$.
We refer to the power law in \eq{zipf_law} as {\em Type II}.

Sometimes in the case 
of \eq{zipf_law}, we are interested in the behavior in $j$; 
therefore we recall $j! = \Gamma(j+1)$ and
note the following fact about the $\Gamma$-function ratio
in \eq{zipf_law}~\citep[cf.][]{tricomi:1951:asymptotic}:
\begin{equation} \label{eq:asymp_j}
	\frac{\Gamma(j-a)}{\Gamma(j+1)} \sim j^{-1-a} \quad j \rightarrow \infty
\end{equation}
Again, we see behavior in the form of a power law at work.


Power-law behavior of Types I and II~\citep[and equivalent 
formulations; see][]{gnedin:2007:notes} has been observed in a variety of 
real-world clustering problems including, but not limited to: the number 
of species per plant genus, the in-degree or out-degree of a graph 
constructed from hyperlinks on the Internet, the number of people in cities, 
the number of words in documents, the number of papers published by 
scientists, and the amount each person earns in
income~\citep{mitzenmacher:2004:brief,goldwater:2006:interpolating}.
Bayesians modeling these situations will prefer 
a prior that reflects this distributional attribute.

While the Dirichlet process exhibits neither type of power-law behavior, 
the \emph{Pitman-Yor process} yields both kinds of
power law~\citep{pitman:1997:two, 
goldwater:2006:interpolating} though we note that in this case $c$
is a random variable (still with no dependence on $N$ or $j$).
The Pitman-Yor process, denoted 
$G \sim \py(\theta, \alpha, G_{0})$, is defined via the following
stick-breaking representation:
\begin{eqnarray}
	\nonumber
	G &=& \sum_{i=1}^{\infty} \left[ V_{i} \prod_{j=1}^{i-1} (1-V_{j}) \right] \delta_{\psi_{i}} \\
	\nonumber
	V_{i} &\stackrel{indep}{\sim}& \tb(1-\alpha,\theta+i\alpha) \\
	\label{eq:py_stick}
	\psi_{i} &\stackrel{iid}{\sim}& G_{0},
\end{eqnarray}
where $\alpha$ is known as a \emph{discount parameter}. The case $\alpha=0$ 
returns the Dirichlet process (cf.\ \eq{dp_stick}).

Note that in both the Dirichlet process and Pitman-Yor process, 
the weights $\{V_{i} \prod_{j=1}^{i-1} (1-V_{j})\}$ are the weights of 
the process in size-biased order~\citep{pitman:2006:combinatorial}. 
In the Pitman-Yor case, the $\{V_{i}\}$ are no longer identically distributed.

\subsection{Power laws in featural models} \label{sec:power_mult}

The beta-Bernoulli process provides a specific kind of feature-based 
representation of entities.  In this section we study general featural 
models and consider the power laws that might arise for such models.

In the clustering framework, we considered $N$ draws from a process that
put exactly one mass of size one on some value in $\Psi$
and mass zero elsewhere.
In the featural framework we consider $N$ draws from a process that places
some non-negative integer number of
masses, each of size one, on an almost surely finite set of values in $\Psi$
and mass zero elsewhere.
As $N_{i}$ was the sum of masses at a point labeled $\psi_{i} \in \Psi$
in the clustering framework, so do we now let $N_{i}$ be the sum of masses
at a point labeled $\psi_{i} \in \Psi$.
We use the same notation as in \mysec{clustering_power}, but now we note 
that the counts $N_i$ no longer sum to $N$ in general.

In the case of featural models, we can still talk about Type I and II 
power laws, both of which have the same interpretation as in the case
of clustering models.  In the featural case, however, it is also 
possible to consider a third type of power law.  If we let $k_{n}$ 
denote the number of features present in the $n$th draw, we say 
that $k_{n}$ shows power law behavior if
$$
	\mbp(k_{n} > M) \sim c M^{-a}
$$
for positive constants $c$ and $a$.  We call this last type of power 
law {\em Type III}.

\section{Stick-breaking for the beta process} \label{sec:bp_stick}

The weights $\{q_i\}$ for the beta process can be derived by a variety of
procedures, including size-biased sampling~\citep{thibaux:2007:hierarchical}
and inverse L\'evy measure~\citep{wolpert:2004:reflecting,teh:2007:stick}.  
The procedures that are closest in spirit to the stick-breaking representation for 
the Dirichlet process are those due to~\citet{paisley:2010:stick} and~\citet{teh:2007:stick}.
Our point of departure is the former, which has the following form:
\begin{eqnarray}
	\nonumber
	B &=& \sum_{i=1}^{\infty} \sum_{j=1}^{C_{i}} V_{i,j}^{(i)} \prod_{l=1}^{i-1} (1-V_{i,j}^{(l)}) \delta_{\psi_{i,j}} \\
	\nonumber
	C_{i} &\stackrel{iid}{\sim}& \pois(\gamma) \\
	\nonumber
	V_{i,j}^{(l)} &\stackrel{iid}{\sim}& \tb(1,\theta) \\
	\label{eq:stick-breaking}
	\psi_{i,j} &\stackrel{iid}{\sim}& \frac{1}{\gamma} B_{0}.
\end{eqnarray}
This representation is analogous to the stick-breaking representation of the 
Dirichlet process in that it represents a draw from the beta process as a sum 
over independently drawn atoms, with the weights obtained by a recursive procedure.
However, it is worth noting that for every $(i,j)$ tuple subscript for $V_{i,j}^{(l)}$,
a different stick exists and
is broken across the superscript $l$. Thus, there are no special additive properties 
across weights in the sum in \eq{stick-breaking}; by contrast, the weights in
\eq{py_stick} sum to one almost surely.

The generalization of the one-parameter Dirichlet process to the two-parameter 
Pitman-Yor process suggests that we might consider generalizing the stick-breaking 
representation of the beta process in \eq{stick-breaking} as follows:
\begin{eqnarray}
	\nonumber
	B &=& \sum_{i=1}^{\infty} \sum_{j=1}^{C_{i}} V_{i,j}^{(i)} \prod_{l=1}^{i-1} (1-V_{i,j}^{(l)}) \delta_{\psi_{i,j}} \\
	\nonumber
	C_{i} &\stackrel{iid}{\sim}& \pois(\gamma) \\
	\nonumber
	V_{i,j}^{(l)} &\stackrel{indep}{\sim}& \tb(1-\alpha,\theta+i\alpha) \\
	\label{eq:stick-breaking_two}
	\psi_{i,j} &\stackrel{iid}{\sim}& \frac{1}{\gamma} B_{0}.
\end{eqnarray}
In \mysec{power_proofs} we will show that introducing the additional 
parameter $\alpha$ indeed yields Type I and II power law behavior (but not Type III). 

In the remainder of this section we present a proof that these stick-breaking
representations arise from the beta process.  In contradistinction to the proof of \eq{stick-breaking}
by~\citet{paisley:2010:stick}, which used a limiting process defined on sequences
of finite binary matrices, our approach makes a direct connection to the 
Poisson process characterization of the beta process.  Our proof has
several virtues: (1) it relies on no asymptotic arguments and instead comes 
entirely from the Poisson process representation; (2) it is, as a result, simpler 
and shorter; and (3) it demonstrates clearly the ease of incorporating a third parameter 
analogous to the discount parameter of the Pitman-Yor process and thereby 
provides a strong motivation for the extended stick-breaking
representation in \eq{stick-breaking_two}.

Aiming toward the general stick-breaking representation in \eq{stick-breaking_two},
we begin by defining a three-parameter generalization
of the beta process.\footnote{See also \citet{teh:2009:indian} or \citet{kim:2001:posterior}, with
$\theta(t) \equiv 1-\alpha, \beta(t) \equiv \theta + \alpha$, where the left-hand sides are in the notation of \citet{kim:2001:posterior}.}
We say that $B \sim \bp(\theta,\alpha,B_{0})$, 
where we call $\alpha$ a {\em discount parameter},
if, for $\psi \in \Psi, u \in [0,1])$, we have
\begin{equation}
	\label{eq:beta_density_three}
	\nu_{\bp}(d\psi, du)
		= \frac{\Gamma(1+\theta)}{\Gamma(1-\alpha)\Gamma(\theta+\alpha)} u^{-1-\alpha} (1-u)^{\theta+\alpha-1} \; du \; B_{0}(d\psi).
\end{equation}
It is straightforward to show that this three-parameter density has similar properties 
to that of the two-parameter beta process. For instance, choosing $\alpha \in (0,1)$ and 
$\theta > -\alpha$ is necessary for the beta process to have finite total mass almost
surely; in this case,
\begin{equation}
	\label{eq:finite_mass_beta}
	\int_{\Psi \times \mathbb{R}_{+}} u \; \nu_{\bp}(d\psi, du)
		= \frac{\Gamma(1-\alpha)\Gamma(\theta+\alpha)}{\Gamma(1+\theta)} < \infty.
\end{equation}

We now turn to the main result of this section.
\begin{proposition} $B$ can be represented according to the process 
described in \eq{stick-breaking_two} if and only if 
$B \sim \bp(\theta, \alpha, B_{0})$. 
\end{proposition}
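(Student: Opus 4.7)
The plan is to verify that the stick-breaking construction in \eq{stick-breaking_two} produces a completely random measure whose L\'evy (rate) measure equals $\nu_{\bp}$ from \eq{beta_density_three}. Since a CRM with continuous base is characterized by its L\'evy measure, this establishes equality in distribution with $\bp(\theta,\alpha,B_0)$ and so the ``if and only if'' follows at once.

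First I would decompose the stick-breaking by round. The round-$i$ contribution consists of $C_i\sim\pois(\gamma)$ iid marked points, with locations iid $B_0/\gamma$ on $\Psi$ and weights $q_i := V^{(i)}\prod_{l=1}^{i-1}(1-V^{(l)})$ built from $V^{(l)}$ iid $\tb(1-\alpha,\theta+i\alpha)$, independent of the locations. Poissonization of iid marks from a Poisson count gives that this round is a Poisson point process on $\Psi\times(0,1)$ with intensity
\[
    \nu_i(d\psi,du) \;=\; \gamma\cdot\frac{B_0(d\psi)}{\gamma}\cdot\mu_i(du) \;=\; B_0(d\psi)\,\mu_i(du),
\]
where $\mu_i$ is the law of $q_i$. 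Independence across $i$ plus Poisson superposition then gives the stick-breaking measure as a Poisson process with intensity $B_0(d\psi)\sum_{i\geq 1}\mu_i(du)$.

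This reduces the proposition to the one-dimensional identity
\[
    \sum_{i=1}^{\infty}\mu_i(du) \;=\; \frac{\Gamma(1+\theta)}{\Gamma(1-\alpha)\Gamma(\theta+\alpha)}\,u^{-1-\alpha}(1-u)^{\theta+\alpha-1}\,du,
\]
which I would verify via Mellin transforms: both sides have finite moments of order $s$ for $s>\alpha$, and their analyticity in the strip $\Re s>\alpha$ makes those moments determining. For $V\sim\tb(a,b)$, $\mbe[V^s]=\Gamma(a+s)\Gamma(a+b)/[\Gamma(a)\Gamma(a+b+s)]$, so by independence of the $V^{(l)}$ within each round,
\[
    \mbe[q_i^s] \;=\; \frac{\Gamma(1-\alpha+s)}{\Gamma(1-\alpha)}\!\left(\frac{\Gamma(1+\theta+(i-1)\alpha)}{\Gamma(1+\theta+(i-1)\alpha+s)}\right)^{\!i}\!\left(\frac{\Gamma(\theta+i\alpha+s)}{\Gamma(\theta+i\alpha)}\right)^{\!i-1},
\]
while integrating the target density against $u^s$ yields $\Gamma(s-\alpha)\Gamma(1+\theta)/[\Gamma(1-\alpha)\Gamma(s+\theta)]$.

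The main obstacle is then the infinite-sum identity $\sum_i\mbe[q_i^s]=\Gamma(s-\alpha)\Gamma(1+\theta)/[\Gamma(1-\alpha)\Gamma(s+\theta)]$. I would attack it by rewriting the Gamma ratios using $1+\theta+(i-1)\alpha=(\theta+i\alpha)+(1-\alpha)$ to expose a telescoping cancellation in $i$; the special case $\alpha=0$ collapses the sum to a geometric series evaluating to $\Gamma(s)\Gamma(1+\theta)/\Gamma(s+\theta)$, which matches the target and suggests that a closed-form cancellation persists for general $\alpha$. Once this identity is established the two L\'evy measures coincide, the two CRMs agree in distribution, and both directions of the proposition follow immediately.
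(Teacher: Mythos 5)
Your overall architecture matches the paper's: each round is a marked Poisson process (a Poisson number of iid marks), superposition gives a Poisson process on $\Psi\times(0,1)$ with intensity $B_{0}(d\psi)\sum_{i}\mu_{i}(du)$ where $\mu_{i}$ is the law of the round-$i$ weight, and since a purely atomic completely random measure is determined by its rate measure, everything reduces to the one-dimensional identity $\sum_{i}\mu_{i}=\mu_{\bp}$. The gap is in how you try to establish that identity, and it is twofold.

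First, you have mis-specified the stick distributions. For the construction to reproduce the beta process, the round-$i$ weight must be the $i$th Pitman--Yor stick-breaking weight, i.e.\ $V_{i,j}^{(l)}\sim\tb(1-\alpha,\theta+l\alpha)$ with the beta parameter indexed by the superscript $l$, so that $V_{i,j}^{(i)}\prod_{l<i}(1-V_{i,j}^{(l)})\eqd V_{i}\prod_{l<i}(1-V_{l})$ for a single Pitman--Yor sequence $\{V_{l}\}$; this is exactly what the paper uses in \eq{sum_term}. You instead take all $i$ sticks in round $i$ iid $\tb(1-\alpha,\theta+i\alpha)$ (an understandable literal reading of \eq{stick-breaking_two}, but not what the proof requires). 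Under your reading the identity you need is false: at $s=1$ your moment is $\mbe[q_{i}]=\frac{1-\alpha}{1+\theta+(i-1)\alpha}\bigl(\frac{\theta+i\alpha}{1+\theta+(i-1)\alpha}\bigr)^{i-1}\sim \frac{(1-\alpha)e^{-(1-\alpha)/\alpha}}{\alpha i}$, so $\sum_{i}\mbe[q_{i}]$ diverges, whereas the target $\int u\,\mu_{\bp}(du)$ is finite. This sanity check at $s=1$ (expected total mass) would have caught the problem. Second, even after correcting the distributions, the crucial infinite-sum identity is left unproven: the Gamma-ratio arguments in consecutive factors are offset by the non-integer amount $1-\alpha$, so the telescoping you hope for does not materialize when $\alpha>0$ (your $\alpha=0$ geometric-series check works precisely because that offset vanishes). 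The paper closes this step with a single observation your route is missing: because the $q_{i}$ are the Pitman--Yor weights in size-biased order, a size-biased pick $\tilde U$ from $\{q_{i}\}$ satisfies both $\mbp(\tilde U\in du)=u\sum_{i}\mu_{i}(du)$ (Campbell's formula applied to $f(u)=u\,g(u)$) and $\tilde U\eqd V_{1}\sim\tb(1-\alpha,\theta+\alpha)$; dividing by $u$ yields $\sum_{i}\mu_{i}=\mu_{\bp}$ with no series to sum. You would need either to import that size-biasing argument or to supply a genuine proof of the sum identity for your Mellin approach to go through.
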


\begin{proof}
First note that the points in the set
$$
	P_{1} := \left\{(\psi_{1,1}, V_{1,1}^{(1)}), (\psi_{1,2}, V_{1,2}^{(1)}), \ldots, (\psi_{1,C_{1}}, V_{1,C_{1}}^{(1)}) \right\}
$$
are by construction independent and identically distributed conditioned on $C_{1}$. 
Since $C_{1}$ is Poisson-distributed, $P_{1}$ is a Poisson point process. 
The same logic gives that in general, for
$$
	P_{i} := \left\{
				\left( \psi_{i,1}, V_{i,1}^{(i)} \prod_{l=1}^{i-1} (1-V_{i,1}^{(l)}) \right),
				\ldots,
				\left( \psi_{i,C_{i}}, V_{i,C_{i}}^{(i)} \prod_{l=1}^{i-1} (1-V_{i,C_{i}}^{(l)}) \right)
			\right\},
$$
$P_{i}$ is a Poisson point process.

Next, define
$$
	P := \bigcup_{i=1}^{\infty} P_{i}
$$
As the countable union of Poisson processes with finite rate measures, $P$ is itself a Poisson point process.

Notice that we can write $B$ as the completely random measure $B = \sum_{(\psi,U) \in P} U \delta_{\psi}$. Also, for any $B' \sim \bp(\theta, d, B_{0})$, we can write $B' = \sum_{(\psi',U') \in \Pi} U' \delta_{\psi'}$,
where $\Pi$ is Poisson point process with rate measure $\nu_{\bp} = B_{0} \times \mu_{\bp}$, and $\mu_{\bp}$ is a $\sigma$-finite measure with density
\begin{equation} \label{eq:beta_density_three_marginal}
	\frac{\Gamma(1+\theta)}{\Gamma(1-\alpha)\Gamma(\theta+\alpha)} u^{-1-\alpha}(1-u)^{\theta+\alpha-1} \; du.
\end{equation}
Therefore, to show that $B$ has the same distribution as $B'$, it is enough to show that $P$ and $\Pi$ have the same rate measures.

To that end, let $\nu$ denote the rate measure of $P$:
\begin{align}
	\nonumber
	\nu(A \times \tilde{A})
		\nonumber
		&= \mbe \#\{(\psi_{i}, U_{i}) \in A \times \tilde{A})\} \\
		\nonumber
		&= \frac{1}{\gamma} B_{0}(A) \cdot \mbe \sum_{i=1}^{\infty} \sum_{j=1}^{C_{i}} \mathbbm{1}\{ V_{ij}^{(i)} \prod_{l=1}^{i-1}(1-V_{ij}^{(l)}) \in \tilde{A}\} \\
		\label{eq:nu_sum}
		&= \frac{1}{\gamma} B_{0}(A) \cdot \sum_{i=1}^{\infty} \mbe \sum_{j=1}^{C_{i}} \mathbbm{1}\{ V_{ij}^{(i)} \prod_{l=1}^{i-1}(1-V_{ij}^{(l)}) \in \tilde{A}\},
\end{align}
where the last line follows by monotone convergence.
Each term in the outer sum can be further decomposed as
\begin{align}
	\nonumber
	\mbe \sum_{j=1}^{C_{i}} \mathbbm{1}\{ V_{ij}^{(i)} \prod_{l=1}^{i-1}(1-V_{ij}^{(l)}) \in \tilde{A}\} 
		&= \mbe\left[ \mbe\left[ \left. \sum_{j=1}^{C_{i}} \mathbbm{1}\{ V_{ij}^{(i)} \prod_{l=1}^{i-1}(1-V_{ij}^{(l)}) \in \tilde{A}\} \right| C_{i} \right] \right] \\
		\nonumber
		&= \mbe\left[ C_{i} \right] \mbe \left[ \mathbbm{1}\{ V_{i1}^{(i)} \prod_{l=1}^{i-1}(1-V_{i1}^{(l)}) \in \tilde{A}\} \right] \\
		\nonumber
		& \textrm{since the $V_{ij}^{(l)}$ are iid across $j$ and independent of $C_{i}$} \\
		\label{eq:sum_term}
		&= \gamma \; \mbe \mathbbm{1}\{ V_{i} \prod_{l=1}^{i-1}(1-V_{l}) \in \tilde{A} \} \\
		\nonumber
		& \textrm{for $V_{i} \stackrel{indep}{\sim} \tb(1-\alpha,\theta+i\alpha)$},
\end{align}
where the last equality follows since the choice of $\{V_{i}\}$ gives $V_{i} \prod_{l=1}^{i-1}(1-V_{l}) \stackrel{d}{=} V_{i1}^{(i)} \prod_{l=1}^{i-1}(1-V_{i1}^{(l)})$.

Substituting \eq{sum_term} back into \eq{nu_sum}, canceling $\gamma$ factors, and applying monotone convergence again yields
\begin{align*}
	\nu(A \times \tilde{A})
		&= B_{0}(A) \cdot \mbe \sum_{i=1}^{\infty} \mathbbm{1}\{ V_{i} \prod_{l=1}^{i-1}(1-V_{l}) \in \tilde{A} \}.
\end{align*}

We note that both of the measures $\nu$ and $\nu_{\bp}$ factorize:
\begin{eqnarray*}
	\nu(A \times \tilde{A})
		&=& B_{0}(A) \cdot \mbe \sum_{i=1}^{\infty} \mathbbm{1}\{ V'_{i} \prod_{l=1}^{i-1}(1-V'_{l}) \in \tilde{A}\} \\
	\nu_{BP}(A \times \tilde{A})
		&=& B_{0}(A) \mu_{\bp}(\tilde{A}),
\end{eqnarray*}
so it is enough to show that $\mu = \mu_{\bp}$ for the measure $\mu$ defined by
\begin{equation}
	\label{eq:meas_set}
	\mu(\tilde{A}) := \mbe \sum_{i=1}^{\infty} \mathbbm{1}\{ V_{i} \prod_{l=1}^{i-1}(1-V_{l}) \in \tilde{A}\}.
\end{equation}

At this point and later in proving \prop{abel_taub}, we will make use of part of Campbell's theorem, which we copy here from~\citet{kingman:1993:poisson} for completeness.
\begin{theorem}[Part of Campbell's Theorem] \label{thm:campbell}
	Let $\Pi$ be a Poisson process on $S$ with rate measure $\mu$, and let $f:S \rightarrow \mathbb{R}$ be measurable. If $\int_{S} \min(|f(x)|,1) \; \mu(dx) < \infty$, then
	\begin{equation} \label{eq:campbell}
		\mbe\left[ \sum_{X \in \Pi} f(X) \right] = \int_{S} f(x) \; \mu(dx).
	\end{equation}
\end{theorem}

Now let $\tilde{U}$ be a size-biased pick from $\{V_{i} \prod_{l=1}^{i-1}(1-V_{l})\}_{i=1}^{\infty}$. By construction, for any bounded, measurable function $g$, we have
$$
	\mbe\left[ g(\tilde{U}) | \{V_{i}\} \right]
		= \sum_{i=1}^{\infty} V_{i} \prod_{l=1}^{i-1}(1-V_{l}) \cdot g(V_{i} \prod_{l=1}^{i-1}(1-V_{l})).
$$
Taking expectations yields
$$
	\mbe g(\tilde{U})
	 	= \mbe \left[ \sum_{i=1}^{\infty} V_{i} \prod_{l=1}^{i-1}(1-V_{l}) g(V_{i} \prod_{l=1}^{i-1}(1-V_{l})) \right]
		= \int u g(u) \mu(du),
$$
where the final equality follows by Campbell's theorem with the choice $f(u) = u g(u)$.
Since this result holds for all bounded, measurable $g$, we have that
\begin{equation}
	\label{eq:size_bias_dist}
	\mbp(\tilde{U} \in du)
		= u \mu(du).
\end{equation}

Finally, we note that, by \eq{meas_set}, $\tilde{U}$ is a size-biased sample 
from probabilities generated by stick-breaking with proportions $\{\tb(1-\alpha,\theta+i\alpha)\}$.
Such a sample is then distributed $\tb(1-\alpha,\theta+\alpha)$ since, as mentioned above, the Pitman-Yor stick-breaking construction gives the size-biased frequencies in order. So, rearranging \eq{size_bias_dist}, we can write
\begin{eqnarray*}
	\mu(du)
		&=& u^{-1} \mbp(\tilde{U} \in du) \\
		&=& u^{-1} \frac{\Gamma(1+\theta)}{\Gamma(1-\alpha)\Gamma(\theta+\alpha)} u^{(1-\alpha)-1} (1-u)^{(\theta+\alpha)-1} \\
		&& \textrm{using the $\tb(1-\alpha,\theta+\alpha)$ density} \\
		&=& \mu_{\bp}(du),
\end{eqnarray*}
as was to be shown.
\end{proof}

\section{Power law derivations} \label{sec:power_proofs}

By linking the three-parameter stick-breaking representation to the power-law 
beta process in \eq{beta_density_three}, we can use the results of the following 
section to conclude that the feature assignments in the three-parameter model 
follow both Type I and Type II power laws and that they do not follow a Type 
III power law (\mysec{power_mult}).  We note that~\citet{teh:2009:indian} found 
big-O behavior for Types I and II in the three-parameter Beta and a Poisson 
distribution for the Type III distribution.  We can strengthen these results 
to obtain exact asymptotic behavior with constants in the first two cases 
and also conclude that Type III power laws can never hold in the featural 
framework whenever the sum of the feature probabilities is almost surely finite, 
an assumption that would appear to be a necessary component of any physically 
realistic model. 

\subsection{Type I and II power laws} \label{sec:types_i_ii_proofs}

Our subsequent derivation expands upon the work of~\citet{gnedin:2007:notes}. 
In that paper, the main thrust of the argument applies to the case in which
the feature probabilities are fixed rather than random. 
In what follows, we obtain power laws of Type I and II in the case in
which the feature probabilities are random, in particular when the probabilities 
are generated from a Poisson process.  We will see that this last assumption 
becomes convenient in the course of the proof.  Finally, we apply our results 
to the specific example of the beta-Bernoulli process.

Recall that we defined $K_{N}$, the number of represented clusters
in the first $N$ data points, and $K_{N,j}$, the number of clusters 
represented $j$ times in the first $N$ data points,
in \eqs{num_groups} and \eqss{frac_groups_j}, respectively.
In \mysec{power_mult}, we noted that same definitions in
\eqs{num_groups} and \eqss{frac_groups_j}
hold for featural models if we now let $N_{i}$
denote the number of data points at time $N$ in which feature $i$
is represented. In terms of the Bernoulli process, $N_{i}$ would
be the number of Bernoulli process draws, out of $N$, where the $i$th atom has
unit (i.e., nonzero) weight. It need not be the case that the $N_{i}$
sum to $N$.

Working directly to find power laws in $K_{N}$ and $K_{N,j}$
as $N$ increases is challenging 
in part due to $N$ being an integer. A standard technique to surmount this 
difficulty is called {\em Poissonization}. In Poissonizing $K_{N}$ and 
$K_{N,j}$, we consider new functions $K(t)$ and 
$K_{j}(t)$ where the argument $t$ is continuous, in contrast to the integer argument $N$.
We will define $K(t)$ and $K_{j}(t)$
such that $K(N)$ and $K_{j}(N)$ have the same asymptotic behavior 
as $K_{N}$ and $K_{N,j}$, respectively.

In particular, our derivation of the asymptotic behavior of $K_{N}$ and $K_{N,j}$ will
consist of three parts and will involve working extensively with the mean feature counts
$$
	\Phi_{N} := \mbe[K_{N}] \quad \textrm{and} \quad \Phi_{N,j} := \mbe[K_{N,j}] \quad (j > 1)
$$
with $N \in \{1,2,\ldots\}$ and the Poissonized mean feature counts
$$
	\Phi(t) := \mbe[K(t)] \quad \textrm{and} \quad \Phi_{j}(t) := \mbe[K_{j}(t)] \quad (j > 1).
$$
with $t > 0$.
First, we will take advantage of Poissonization to find power laws in $\Phi(t)$
and $\Phi_{j}(t)$ as $t \rightarrow \infty$ (\prop{abel_taub}). Then, in order
to relate these results back to the original process, we will show that $\Phi_{N}$ and $\Phi(N)$
have the same asymptotic behavior and also that $\Phi_{N,j}$ and $\Phi_{j}(N)$ have the
same asymptotic behavior (\lem{poissonization}). Finally, to obtain results
for the random process values $K_{N}$
and $K_{N,j}$, we will conclude by showing that $K_{N}$ almost surely
has the same asymptotic behavior as $\Phi_{N}$ and that $\sum_{k < j} K_{N,k}$
almost surely has the same asymptotic behavior as $\sum_{k < j} \Phi_{N,k}$
(\prop{as_mean}).

To obtain power laws for the Poissonized process,
we must begin by defining $K(t)$ and $K_{j}(t)$.
To do so, we will construct Poisson processes on the 
positive half-line, one for each feature. $K(t)$ will be the number of such Poisson 
processes with points in the interval $[0,t]$; similarly, $K_{j}(t)$ 
will be the number of Poisson processes with $j$ points in the interval 
$[0,t]$. This construction is illustrated in \fig{poissonization}. It remains to 
specify the rates of these Poisson processes.

\begin{figure}
	\begin{center}
	\includegraphics[height=0.5\textwidth]{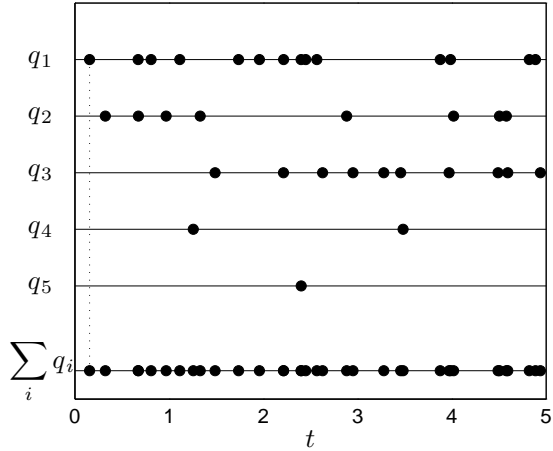}
	\end{center}
	\caption{\label{fig:poissonization}
	The first five sets of points, starting from the top of the figure, illustrate Poisson processes on the positive half-line in the range $t \in (0,5)$ with respective rates $q_{1},\ldots, q_{5}$. The bottom set of points illustrates the union of all points from the preceding Poisson point processes and is, therefore, itself a Poisson process with rate $\sum_{i} q_{i}$. In this example, we have for instance that $K(1) = 2$, $K(4)=5$, and $K_{2}(4) = 1$.}
\end{figure}

Let $(q_{1},q_{2},\ldots)$ be a countably infinite vector of
feature probabilities. We begin by putting minimal restrictions
on the $q_{i}$. We assume that they are strictly positive,
decreasing real numbers. They need not necessarily sum
to one, and they may be random.
Indeed, we will eventually consider the case where
the $q_{i}$ are the (random) atom weights of a beta process,
and then
we will have $\sum_{i} q_{i} \ne 1$ with probability one.

Let $\Pi_{i}$ be a standard Poisson process on the positive
real line generated with rate $q_{i}$ (see, e.g., the top five lines
in \fig{poissonization}). Then $\Pi := \bigcup_{i} \Pi_{i}$
is a standard Poisson process on the positive real line with rate
$\sum_{i} q_{i}$ (see, e.g., the lowermost line in \fig{poissonization}),
where we henceforth assume
$\sum_{i} q_{i} < \infty$ a.s. 

Finally, as mentioned above,
we define $K(t)$ to be the number of Poisson processes
$\Pi_{i}$ with any points in $[0,t]$:
$$
	K(t) := \sum_{i} \mathbbm{1}\{ | \Pi_{i} \cap [0,t] | > 0 \}.
$$
And we define $K_{j}(t)$ to be the number of Poisson processes
$\Pi_{i}$ with exactly $j$ points in $[0,t]$:
$$
	K_{j}(t) := \sum_{i} \mathbbm{1}\{ |\Pi_{i} \cap [0,t] | = j\}.
$$
These definitions are very similar to the definitions of $K_{N}$ and
$K_{N,j}$ in \eqs{num_groups} and \eqss{frac_groups_j}, respectively.
The principal difference is that the $K_{N}$ are incremented only at
integer $N$
whereas the $K(t)$ can have jumps at any $t \in \mathbb{R}_{+}$.
The same observation holds for the $K_{N,j}$ and $K_{j}(t)$.

In addition to Poissonizing $K_{N}$ and $K_{N,j}$ to define $K(t)$ and $K_{j}(t)$,
we will also find it convenient to assume that the $\{q_{i}\}$ themselves are derived
from a Poisson process with rate measure $\nu$. We note that
Poissonizing from a discrete index $N$ to a continuous time index $t$
is an approximation and separate from our assumption that the
$\{q_{i}\}$ are generated from a Poisson process though both are
fundamentally tied to the ease of working with Poisson processes.

We are now able to write out the mean feature 
counts in both the Poissonized and original cases. First, the 
Poissonized definitions
of $\Phi$ and $K$ allow us to write
$$
	\Phi(t) := \mbe[K(t)] = \mbe[ \mbe[K(t) | q] ] = \mbe[ \mbe[ \sum_{i} \mathbbm{1}\{|\Pi_{i} \cap [0,t] | > 0\} | q] ].
$$
With a similar approach for $\Phi_{j}(t)$, we find
$$
	\Phi(t) = \mbe[\sum_{i} (1-e^{-tq_{i}})], \quad \Phi_{j}(t) = \mbe[ \sum_{i} \frac{(tq_{i})^{j}}{j!} e^{-tq_{i}} ].
$$
With the assumption that the $\{q_{i}\}$ are drawn from a 
Poisson process with measure measure $\nu$, we can
apply Campbell's theorem (\thm{campbell}) to both
the original and Poissonized versions of the process to derive the final
equality in each of the following lines
\begin{align}
	\label{eq:phi_t}
	\Phi(t) &= \mbe[ \sum_{i} (1-e^{-tq_{i}}) ] = \int_{0}^{1} (1-e^{-tx}) \; \nu(dx) \\
	\label{eq:phi_N}
	\Phi_{N} &= \mbe[ \sum_{i} (1-(1-q_{i})^{N}) ] = \int_{0}^{1} (1-(1-x)^{N}) \; \nu(dx) \\
	\label{eq:phi_tj}
	\Phi_{j}(t) &= \mbe[ \sum_{i} \frac{(tq_{i})^{j}}{j!} e^{-tq_{i}} ] = \frac{t^{j}}{j!} \int_{0}^{1} x^{j} e^{-tx} \; \nu(dx) \\
	\label{eq:phi_Nj}
	\Phi_{N,j} &= \binom{N}{j} \mbe[ \sum_{i} q_{i}^{j} (1-q_{i})^{N-j} ] = \binom{N}{j} \int_{0}^{1} x^{j} (1-x)^{N-j} \; \nu(dx).
\end{align}

Now we establish our first result, which gives a power law in
$\Phi(t)$ and $\Phi_{j}(t)$ when the Poisson process rate measure $\nu$
has corresponding power law properties.

\begin{proposition} \label{prop:abel_taub} Asymptotic behavior of the integral
of $\nu$ of the following form
\begin{equation} \label{eq:vec_nu}
	\nu_{1}[0,x] := \int_{0}^{x} u \; \nu(du) \sim \frac{\alpha}{1-\alpha} x^{1-\alpha} l(1/x),
		\quad x \rightarrow 0
\end{equation}
where $l$ is a regularly varying function and $\alpha \in (0,1)$ implies
\begin{eqnarray*}
	\Phi(t) &\sim& \Gamma(1-\alpha) t^{\alpha} l(t),
		\quad t \rightarrow \infty \\
	\Phi_{j}(t) &\sim& \frac{\alpha \Gamma(j-\alpha)}{j!} t^{\alpha} l(t),
		\quad t \rightarrow \infty
		\quad (j > 1).
\end{eqnarray*}
\end{proposition}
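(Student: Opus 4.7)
My plan is an Abelian--Tauberian argument via Karamata's theorem: rewrite $\Phi(t)$ and $\Phi_j(t)$ as Laplace--Stieltjes transforms of suitable moment measures of $\nu$, propagate the regular variation of $\nu_1$ at $0$ up to these higher moment measures, and then apply Karamata's Tauberian theorem to convert behavior of the measures near $0$ into behavior of their Laplace transforms at $\infty$.

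Concretely, for each integer $j \ge 1$, I would introduce the moment measure $\nu_j[0,x] := \int_0^x u^j\,\nu(du)$, so that $d\nu_j(u) = u^{j-1}\, d\nu_1(u)$ and, by \eq{phi_tj}, $\Phi_j(t) = (t^j/j!)\,\hat{\nu}_j(t)$ with $\hat{\nu}_j(s):=\int_0^\infty e^{-sx}\, d\nu_j(x)$. Using $(1-e^{-tx})/x = \int_0^t e^{-sx}\,ds$ and Fubini in \eq{phi_t} also gives $\Phi(t) = \int_0^t \hat{\nu}_1(s)\, ds$. The first technical step is to upgrade the hypothesis $\nu_1[0,x] \sim \tfrac{\alpha}{1-\alpha}\, x^{1-\alpha}\, l(1/x)$ into the asymptotic
\begin{equation*}
\nu_j[0,x] \sim \frac{\alpha}{j-\alpha}\, x^{j-\alpha}\, l(1/x), \qquad x\to 0,
\end{equation*}
which I would obtain by integration by parts, $\nu_j[0,x] = x^{j-1}\nu_1[0,x] - (j-1)\int_0^x u^{j-2}\nu_1[0,u]\, du$, followed by Karamata's theorem on integrals of regularly varying functions (applicable since $j-1-\alpha>-1$), after which the factor $(1-\alpha)/(j-\alpha)$ falls out cleanly. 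Karamata's Tauberian theorem, in the form $U[0,x]\sim x^\rho L(1/x)/\Gamma(1+\rho)$ as $x\to 0$ if and only if $\hat U(s)\sim s^{-\rho}L(s)$ as $s\to \infty$, applied with $\rho = j-\alpha$ and the slowly varying factor $L(y) = \alpha\Gamma(j-\alpha)\,l(y)$ (forced by $\Gamma(1+j-\alpha) = (j-\alpha)\Gamma(j-\alpha)$), then yields $\hat{\nu}_j(t)\sim \alpha\Gamma(j-\alpha)\,t^{-(j-\alpha)}l(t)$, and multiplying by $t^j/j!$ gives the stated asymptotic for $\Phi_j(t)$. For $\Phi(t)$, the $j=1$ case gives $\hat{\nu}_1(s)\sim \alpha\Gamma(1-\alpha)\,s^{\alpha-1}l(s)$, a regularly varying function of index $\alpha-1 \in (-1,0)$; a second appeal to Karamata's integration theorem delivers $\Phi(t)=\int_0^t\hat{\nu}_1(s)\,ds\sim \alpha\Gamma(1-\alpha)\,t^\alpha l(t)/\alpha = \Gamma(1-\alpha)\,t^\alpha l(t)$.

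The main obstacle is not any single hard estimate but the careful bookkeeping of constants and slowly varying factors through the chain $\nu_1 \leadsto \nu_j \leadsto \hat{\nu}_j \leadsto \Phi_j$: the particular normalization $\alpha/(1-\alpha)$ assumed on $\nu_1[0,x]$ is tailored precisely so that the cancellations along this chain produce the clean prefactors $\Gamma(1-\alpha)$ and $\alpha\Gamma(j-\alpha)/j!$ in the statement. A minor subsidiary point is that the transforms are defined over $(0,1]$ rather than $(0,\infty)$, but the cutoff at $x=1$ contributes only $O(e^{-t})$ as $t\to\infty$ and is absorbed into the asymptotic.
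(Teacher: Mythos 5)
Your proposal is correct and all the constants check out: the integration-by-parts identity $\nu_j[0,x]=x^{j-1}\nu_1[0,x]-(j-1)\int_0^x u^{j-2}\nu_1[0,u]\,du$ combined with Karamata's integration theorem does give the coefficient $\frac{\alpha}{1-\alpha}\bigl(1-\frac{j-1}{j-\alpha}\bigr)=\frac{\alpha}{j-\alpha}$ (the subtraction of asymptotics is legitimate because the leading coefficients differ), the Abelian direction of Karamata's Tauberian theorem then yields $\hat\nu_j(t)\sim\alpha\Gamma(j-\alpha)t^{-(j-\alpha)}l(t)$, and the final Karamata integration of $\hat\nu_1$ produces $\Gamma(1-\alpha)t^{\alpha}l(t)$. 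The route is genuinely different from the paper's in its choice of intermediate object. The paper first passes from $\nu_1[0,x]$ to the tail $\bar\nu(x):=\nu([x,\infty))$ via $\nu_1[0,x]=-x\bar\nu(x)+\int_0^x\bar\nu(u)\,du$, deducing $\bar\nu(x)\sim x^{-\alpha}l(1/x)$; this is a genuinely Tauberian (monotone-density) step whose validity rests on the monotonicity of $\bar\nu$. It then writes $\Phi(t)=t\int_0^\infty e^{-tx}\bar\nu(x)\,dx$ and applies an Abelian theorem for Laplace transforms, with $\Phi_j$ handled by repeated integration by parts of \eq{phi_tj}. You instead stay entirely on the Abelian side, propagating the regular variation of $\nu_1$ upward to the moment measures $\nu_j$ and then to their Laplace--Stieltjes transforms, which avoids any Tauberian side condition at the cost of some extra bookkeeping of slowly varying factors; the paper's tail-function route handles all $j$ through a single intermediate object at the cost of one step that needs monotonicity to justify it. One caveat shared by both arguments: the hypothesis as stated calls $l$ ``regularly varying,'' but both your chain of Karamata applications and the paper's require $l$ to be slowly varying, which is what is intended and what holds in the application (where $l$ is constant).
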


\begin{proof}
The key to this result is in the repeated use of Abelian or Tauberian theorems.
Let $A$ be a map $A: F \rightarrow G$ from one function space to another: e.g.,
an integral or a Laplace transform. For $f \in F$, an Abelian theorem gives us the
asymptotic behavior of $A(f)$ from the asymptotic behavior of $f$, and a Tauberian
theorem gives us the asymptotic behavior of $f$ from that of $A(f)$. 

First, integrating by parts yields
$$
	\nu_{1}[0,x]  = -x \bar{\nu}(x) + \int_{0}^{x} \bar{\nu}(u) \; du, \quad \bar{\nu}(x) := \int_{x}^{\infty} \nu(u) \; du,
$$
so the stated asymptotic behavior in $\nu_{1}$ yields
$\bar{\nu}(x) \sim l(1/x) x^{-\alpha} (x \rightarrow 0)$ by
a Tauberian theorem~\citep{feller:1966:introduction,gnedin:2007:notes}
where the map $A$ is an integral.

Second, another integration by parts yields
$$
	\Phi(t) = t \int_{0}^{\infty} e^{-tx} \bar{\nu}(x) \; dx.
$$
The desired asymptotic behavior in $\Phi$ follows
from the asymptotic behavior in $\bar{\nu}$ and an
Abelian theorem~\citep{feller:1966:introduction,gnedin:2007:notes}
where the map $A$ is a 
Laplace transform. The result for $\Phi_{j}(t)$ 
follows from a similar argument when we note 
that repeated integration by parts of \eq{phi_tj} 
also yields a Laplace transform.
\end{proof}

The importance of assuming that the $q_{i}$ are distributed according
to a Poisson process is that this assumption allowed us to
write $\Phi$ as an integral and thereby make use of classic Abelian
and Tauberian theorems. The importance of Poissonizing the
processes $K_{j}$ and $K_{N,j}$ is that we can write their
means as in \eqs{phi_t} and \eqss{phi_tj}, which are---up
to integration by parts---in the form of Laplace transforms.

\prop{abel_taub} is the most significant link in the chain of results 
needed to show asymptotic behavior of the feature counts $K_{N}$ and
$K_{N,j}$ in that it relates power laws in the known feature probability
rate measure $\nu$ to power laws in the mean behavior of the
Poissonized version of these processes. It remains to show this mean
behavior translates back to $K_{N}$ and $K_{N,j}$, first by relating
the means of the original and Poissonized processes and then by
relating the means to the almost sure behavior of the counts.
The next two lemmas address the former concern.
Together they establish that the mean feature counts
$\Phi_{N}$ and  $\Phi_{N,j}$
have the same asymptotic behavior as the corresponding
Poissonized mean feature counts
$\Phi(N)$ and $\Phi_{j}(N)$.

\begin{lemma} \label{lem:basic_asymp}
Let $\nu$ be $\sigma$-finite with $\int_{0}^{\infty} \nu(du) = \infty$
and $\int_{0}^{\infty} u \; \nu(du) < \infty$.
Then the number of represented features has unbounded growth almost surely. 
The expected number of represented features has unbounded growth, 
and the expected number of features has sublinear growth. That is,
$$
	K(t) \uparrow \infty \textrm{ a.s}., \quad \Phi(t) \uparrow \infty, \quad \Phi(t) \ll t.
$$
\end{lemma}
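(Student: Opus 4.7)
The plan is to dispatch the three statements in increasing order of care, all from the integral representation $\Phi(t) = \int_0^{\infty} (1-e^{-tx})\,\nu(dx)$ in \eqref{eq:phi_t} together with a direct conditioning argument for the almost-sure claim on $K(t)$.

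For $\Phi(t)\uparrow\infty$, I would observe that the integrand is nonnegative, monotonically increasing in $t$, and tends pointwise to $1$ for every $x>0$. Monotone convergence then yields $\Phi(t)\uparrow\int_0^\infty \nu(dx)=\infty$ by the hypothesis of infinite total mass. For $\Phi(t)\ll t$, divide by $t$ to get $\Phi(t)/t = \int_0^\infty t^{-1}(1-e^{-tx})\,\nu(dx)$, dominate the integrand pointwise by $x$ via the elementary bound $1-e^{-y}\le y$, and note that $x$ is $\nu$-integrable by the hypothesis $\int u\,\nu(du)<\infty$. Since the integrand tends to $0$ pointwise as $t\to\infty$, dominated convergence closes the step.

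For $K(t)\uparrow\infty$ almost surely, I would first note that $\int\nu(du)=\infty$ forces the underlying Poisson process of rates to have infinitely many atoms $\{q_i\}$ almost surely. Conditioning on this realization, each independent Poisson process $\Pi_i$ on $\mathbb{R}_+$ of strictly positive rate $q_i$ fires in finite time a.s., so the indicator $\mathbbm{1}\{|\Pi_i \cap [0,t]|>0\}$ is nondecreasing in $t$ and rises to $1$ almost surely. Hence for any fixed $M$, all but finitely many of the first $M$ indicators eventually equal $1$, so $K(t)\ge M$ for $t$ large enough. Since $M$ is arbitrary and $K(t)$ is monotone in $t$, $K(t)\uparrow\infty$ a.s., and integrating out the $q_i$'s preserves the statement on the joint probability space.

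None of this is genuinely hard; the one point deserving care is keeping the two layers of randomness separate---the Poisson process producing the feature rates $\{q_i\}$ versus the per-feature firing processes $\{\Pi_i\}$---and recognizing that the finite-first-moment hypothesis $\int u\,\nu(du)<\infty$ is exactly what furnishes the integrable majorant needed for dominated convergence in the sublinear-growth step.
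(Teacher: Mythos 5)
Your proof is correct and follows essentially the same route as the paper's (which is only a three-line sketch): monotone convergence of $1-e^{-tx}\uparrow 1$ against the infinite-mass measure $\nu$ for $\Phi(t)\uparrow\infty$, the countable infinity of strictly positive rates $q_i$ for $K(t)\uparrow\infty$ a.s., and the finiteness of $\int u\,\nu(du)$ (equivalently $\mbe\sum_i q_i<\infty$ via Campbell's theorem) for $\Phi(t)\ll t$. Your version simply supplies the details --- in particular the explicit dominated-convergence argument with majorant $x$ for the sublinearity step --- that the paper leaves implicit.
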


\begin{proof}
As in~\citet{gnedin:2007:notes}, the first statement follows from the 
fact that $q$ is countably infinite and each $q_{i}$ is strictly positive. 
The second statement follows from monotone convergence. The final statement 
is a consequence of $\sum_{i} q_{i} < \infty$ a.s.
\end{proof}

\begin{lemma} \label{lem:poissonization}
Suppose the $\{q_{i}\}$ are generated according to a Poisson process
with rate measure as in \lem{basic_asymp}.  Then, for $N \rightarrow \infty$, 
\begin{align*}
	|\Phi_{N} - \Phi(N)| &< \frac{2}{N} \Phi_{2}(N) \rightarrow 0 \\
	|\Phi_{N,j} - \Phi_{j}(N)| &< \frac{c_{j}}{N} \max\{\Phi_{j}(N), \Phi_{j+2}(N)\} \rightarrow 0.
\end{align*}
for some constants $c_{j}$.
\end{lemma}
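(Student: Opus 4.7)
The plan is to turn both claims into pointwise estimates on the integrands in \eqs{phi_t}--\eqss{phi_Nj}, integrate against $\nu$, and conclude via the sublinear growth $\Phi(t) \ll t$ from \lem{basic_asymp}, which implies $\Phi_j(t)/t \to 0$ for every $j \ge 1$ since $K_j(t) \le K(t)$ pointwise, hence $\Phi_j(t) \le \Phi(t)$.

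For the first inequality I would write
\[
\Phi(N) - \Phi_N = \int_0^1 \bigl[e^{-Nx} - (1-x)^N\bigr]\,\nu(dx) \ge 0,
\]
and prove the pointwise bound $e^{-Nx} - (1-x)^N \le Nx^2\, e^{-Nx}$ by factoring as $e^{-Nx}\bigl(1 - [(1-x)e^x]^N\bigr)$, applying $1 - a^N \le N(1-a)$ for $a\in[0,1]$, and using the Taylor identity $1 - (1-x)e^x = \sum_{k\ge 2}\tfrac{k-1}{k!}x^k \le x^2$ (the coefficient sum $\sum_{k\ge 2}(k-1)/k!$ telescopes to $1$). Integrating and rearranging $\tfrac{N^2}{2}\int x^2 e^{-Nx}\,\nu(dx) = \Phi_2(N)$ gives $|\Phi_N - \Phi(N)| < \tfrac{2}{N}\Phi_2(N)$, which vanishes since $\Phi_2(N) \le \Phi(N) \ll N$.

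For the $j$-th inequality I would factor $\binom{N}{j} = \tfrac{N^j}{j!}\prod_{k=1}^{j-1}(1-k/N)$ and express
\[
\Phi_{N,j} - \Phi_j(N) = \int_0^1 \tfrac{(Nx)^j}{j!}e^{-Nx}\bigl[ABC - 1\bigr]\,\nu(dx),
\]
where $A = [(1-x)e^x]^{N-j}$, $B = e^{jx}$, and $C = \prod_{k=1}^{j-1}(1-k/N)$. The algebraic identity $ABC - 1 = BC(A-1) + C(B-1) + (C-1)$ together with the pointwise bounds $1-A \le Nx^2$ (as above), $B-1 \le jx\,e^j$, $1-C \le \tfrac{j(j-1)}{2N}$, and $BC \le e^j$ yields $|ABC-1| \le e^j(Nx^2 + jx) + \tfrac{j(j-1)}{2N}$. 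After multiplying by $\tfrac{(Nx)^j}{j!}e^{-Nx}$ and using the routine shifts $\tfrac{(Nx)^j}{j!}\cdot Nx^2 = \tfrac{(j+1)(j+2)}{N}\tfrac{(Nx)^{j+2}}{(j+2)!}$ and $\tfrac{(Nx)^j}{j!}\cdot jx = \tfrac{j(j+1)}{N}\tfrac{(Nx)^{j+1}}{(j+1)!}$, integration against $\nu$ produces a bound of the shape
\[
|\Phi_{N,j} - \Phi_j(N)| \le \tfrac{1}{N}\bigl[a_j\Phi_{j+2}(N) + b_j\Phi_{j+1}(N) + d_j\Phi_j(N)\bigr].
\]

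The main obstacle is that a $\Phi_{j+1}(N)$ term appears here but not in the stated bound. I would absorb it using Cauchy--Schwarz: splitting $x^{j+1}e^{-Nx} = (x^{j/2}e^{-Nx/2})(x^{(j+2)/2}e^{-Nx/2})$ under $\nu$ yields $\Phi_{j+1}(N) \le \sqrt{\tfrac{j+2}{j+1}}\sqrt{\Phi_j(N)\Phi_{j+2}(N)} \le \sqrt{\tfrac{j+2}{j+1}}\max\{\Phi_j(N), \Phi_{j+2}(N)\}$. Collecting constants then gives the claimed $\tfrac{c_j}{N}\max\{\Phi_j(N), \Phi_{j+2}(N)\}$ bound, and convergence to $0$ follows from $\Phi_j(N), \Phi_{j+2}(N) \le \Phi(N) \ll N$.
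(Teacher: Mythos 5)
Your argument is correct, and it supplies exactly the ``algebraic manipulations'' that the paper's proof outsources to Lemma~1 of Gnedin, Hansen, and Pitman (2007): a pointwise comparison of the integrands in the Campbell-theorem representations \eqss{phi_t}--\eqss{phi_Nj}, the factorial shifts that convert $x^{j}e^{-Nx}\cdot Nx^2$ and $x^{j}e^{-Nx}\cdot jx$ into $N^{-1}$ times higher-order $\Phi$'s, and the log-convexity (Cauchy--Schwarz) interpolation $\Phi_{j+1}(N)\le\sqrt{(j+2)/(j+1)}\,\max\{\Phi_{j}(N),\Phi_{j+2}(N)\}$ to absorb the intermediate term --- together with $\Phi_{j}(t)\le\Phi(t)\ll t$ from \lem{basic_asymp} for the convergence to zero. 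The only slip is a sign: since $(1-x)^{N}\le e^{-Nx}$, the nonnegative quantity is $\Phi_{N}-\Phi(N)=\int_{0}^{1}\bigl[e^{-Nx}-(1-x)^{N}\bigr]\,\nu(dx)$, not $\Phi(N)-\Phi_{N}$; this does not affect the absolute-value bound. (You should also note that the bound $1-C\le j(j-1)/(2N)$ requires $N\ge j-1$, which is harmless for the asymptotic claim.)
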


\begin{proof}
The proof is the same as that of Lemma 1 of~\citet{gnedin:2007:notes}. Establishing the inequalities results from algebraic manipulations. The convergence to zero is a consequence of \lem{basic_asymp}.
\end{proof}

Finally, before considering the specific case of the three-parameter beta 
process, we wish to show that power laws in the means $\Phi_{N}$ and 
$\Phi_{N,j}$ extend to almost sure power laws in the number of represented features.

\begin{proposition} \label{prop:as_mean}
Suppose the $\{q_{i}\}$ are generated from a Poisson process
with rate measure as in \lem{basic_asymp}. 
For $N \rightarrow \infty$,
$$
	K_{N} \stackrel{a.s.}{\sim} \Phi_{N}, \quad \sum_{k < j} K_{N,k} \stackrel{a.s.}{\sim} \sum_{k < j} \Phi_{N,k}.
$$
\end{proposition}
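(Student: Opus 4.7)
The plan is to establish the almost-sure asymptotics via a two-step concentration argument: first bound $\mathrm{Var}(K_N)$ and $\mathrm{Var}(\sum_{k<j}K_{N,k})$ in terms of the corresponding means, then use Chebyshev plus Borel–Cantelli along a suitable subsequence and upgrade via monotonicity using the regular variation furnished by \prop{abel_taub}.

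For the variance bound on $K_N$, I would exploit the fact that conditional on the atoms $\{q_i\}$ (coming from a Poisson process with mean measure $\nu$), the counts $N_i$ are independent $\mathrm{Bin}(N, q_i)$. Hence $K_N = \sum_i \mathbbm{1}\{N_i > 0\}$ is, conditionally, a sum of independent Bernoullis with parameters $p_i(N) = 1-(1-q_i)^N$, so $\mathrm{Var}(K_N \mid q) \le \mathbb{E}[K_N \mid q]$. By the total variance formula,
\[
\mathrm{Var}(K_N) \;=\; \mathbb{E}[\mathrm{Var}(K_N\mid q)] + \mathrm{Var}(\mathbb{E}[K_N\mid q]),
\]
and Campbell's theorem (\thm{campbell}) gives $\mathrm{Var}(\mathbb{E}[K_N\mid q]) = \int p_i(N)^2\,\nu(dq) \le \Phi_N$. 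Together this yields $\mathrm{Var}(K_N) \le 2\Phi_N$.

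Next I would pick a subsequence $N_r$ with $\Phi_{N_r}$ diverging fast enough for summability. Since \prop{abel_taub} and \lem{poissonization} tell us $\Phi_N$ is regularly varying of index $\alpha \in (0,1)$, I can take $N_r = \lceil r^{2/\alpha}\rceil$, so that $\Phi_{N_r} \asymp r^2 l(r)$ and $\sum_r \Phi_{N_r}^{-1} < \infty$. Chebyshev gives $\mathbb{P}(|K_{N_r}-\Phi_{N_r}| > \varepsilon \Phi_{N_r}) \le 2/(\varepsilon^2 \Phi_{N_r})$, and Borel–Cantelli produces $K_{N_r}/\Phi_{N_r} \to 1$ a.s. To fill in arbitrary $N$, I would use the monotonicity $K_{N_r} \le K_N \le K_{N_{r+1}}$ whenever $N_r \le N \le N_{r+1}$ together with $\Phi_{N_{r+1}}/\Phi_{N_r} \to 1$ (a consequence of regular variation, since $N_{r+1}/N_r \to 1$), giving the sandwich $K_N/\Phi_N \to 1$ a.s.

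For the second assertion, the obstacle is that $\sum_{k<j} K_{N,k}$ is \emph{not} monotone in $N$, since a feature of count $k<j$ can be upgraded. I would therefore work with the monotone quantity $M_{N,j} := \sum_{k\ge j} K_{N,k}$, the number of features represented at least $j$ times, which is non-decreasing in $N$, and use the identity
\[
\sum_{k<j} K_{N,k} \;=\; K_N - M_{N,j}.
\]
The same conditional-independence argument bounds $\mathrm{Var}(M_{N,j}) \le 2\,\mathbb{E}[M_{N,j}]$, and \prop{abel_taub} combined with $\mathbb{E}[M_{N,j}] = \sum_{k\ge j} \Phi_{N,k}$ gives regular variation of index $\alpha$ for $\mathbb{E}[M_{N,j}]$. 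Repeating the Borel–Cantelli/monotonicity argument yields $M_{N,j}/\mathbb{E}[M_{N,j}] \to 1$ a.s., whence
\[
\sum_{k<j} K_{N,k} = K_N - M_{N,j} \stackrel{a.s.}{\sim} \Phi_N - \mathbb{E}[M_{N,j}] = \sum_{k<j}\Phi_{N,k}.
\]
The delicate point is the last step: the difference of two quantities each $\sim c\, N^\alpha l(N)$ need not itself be of the same order, so I would need to verify (using the exact constants in \prop{abel_taub}) that $\sum_{k<j}\Phi_{N,k}$ is again regularly varying of index $\alpha$ with strictly positive leading constant $\alpha l(N) \sum_{k=1}^{j-1} \Gamma(k-\alpha)/k!$, so that the subtraction does not collapse to lower order. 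That verification — plus invoking Gnedin et al.'s Lemma~1 in place of redoing \lem{poissonization} — is the main calculation, but it is essentially bookkeeping given the constants already computed in \prop{abel_taub}.
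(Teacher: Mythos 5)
Your argument is correct, but it takes a genuinely different route from the paper. The paper's proof applies an exponential concentration inequality (the note after Theorem~4 of Freedman, 1973) directly to $K_N$: because the $\{q_i\}$ form a Poisson process, $K_N=\sum_i\mathbbm{1}\{N_i>0\}$ is (by marking/thinning) a Poisson random variable with mean $\Phi_N$, so one gets tail bounds of the form $\exp(-c\,\epsilon^2\Phi_N)$ that are summable over \emph{all} $N$ once $\Phi_N$ grows polynomially. Borel--Cantelli then applies to the full sequence, no subsequence or monotone interpolation is needed, and the second claim genuinely does follow ``by the same argument,'' since $\sum_{k<j}K_{N,k}$ is likewise a Poisson sum and its non-monotonicity in $N$ is irrelevant. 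Your route replaces the exponential bound by a second-moment bound, $\var(K_N)\le 2\Phi_N$ (in fact $\var(K_N)=\Phi_N$ exactly, by the same Poisson-functional computation you use via Campbell's theorem), which only yields Chebyshev decay $1/\Phi_N$; this forces the subsequence $N_r=\lceil r^{2/\alpha}\rceil$ plus the monotone sandwich, and in turn forces you to confront the fact that $\sum_{k<j}K_{N,k}$ is not monotone --- your detour through $M_{N,j}=\sum_{k\ge j}K_{N,k}$ and the check that $\sum_{k<j}\Phi_{N,k}$ retains a strictly positive leading constant of order $N^\alpha$ (true, since $\sum_{k\ge1}\alpha\Gamma(k-\alpha)/(k!\,\Gamma(1-\alpha))=1$ so no partial sum exhausts $\Gamma(1-\alpha)$) are both necessary and correctly handled. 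What your approach buys is elementarity (no Chernoff-type inequality) at the cost of length and of leaning harder on regular variation for the interpolation step $\Phi_{N_{r+1}}/\Phi_{N_r}\to1$; note that both proofs quietly use the power-law growth of $\Phi_N$ (yours for summability along $N_r$ and for the sandwich, the paper's for summability of $\exp(-c\Phi_N)$), which goes beyond the bare hypotheses of \lem{basic_asymp}.
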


\begin{proof}
We wish to show that $K_{N} / \Phi_{N} \stackrel{a.s.}{\rightarrow} 1$ as 
$N \rightarrow \infty$. By Borel-Cantelli, it is enough to show that, 
for any $\epsilon > 0$,
$$
	\sum_{N} \mbp\left( \left| \frac{K_{N}}{\Phi_{N}} - 1 \right| > \epsilon \right) < \infty.
$$
To that end, note
$$
	\mbp\left( \left| K_{N} - \Phi_{N} \right| > \epsilon \Phi_{N} \right)
		\le \mbp\left( \Phi_{N} > \epsilon \Phi_{N} + K_{N} \right)
			+ \mbp\left( K_{N} > \epsilon \Phi_{N} + \Phi_{N} \right).
$$
The note after Theorem 4 in~\citet{freedman:1973:another} gives that
\begin{eqnarray*}
	\mbp\left( \Phi_{N} > \epsilon \Phi_{N} + K_{N} \right)
		&\le& \exp\left( - \epsilon^{2} \Phi_{N} \right) \\
	\mbp\left( K_{N} > \epsilon \Phi_{N} + \Phi_{N} \right)
		&\le& \exp\left( - \frac{\epsilon^{2}}{1+\epsilon} \Phi_{N} \right).
\end{eqnarray*}
So
\begin{eqnarray*}
	\mbp\left( \left| \frac{K_{N}}{\Phi_{N}} - 1 \right| > \epsilon \right)
		&\le& 2 \exp\left( - 2 \epsilon^{2} \Phi_{N} \right) \\
		&\le& c \exp\left( - 2 \epsilon^{2} N \right)
\end{eqnarray*}
for some constant $c$ and sufficiently large $N$ by \lems{basic_asymp} 
and \lemss{poissonization}. The last expression is summable in $N$, and 
Borel-Cantelli holds.

The proof that $\quad \sum_{k < j} K_{N,k} \stackrel{a.s.}{\sim} 
\sum_{k < j} \Phi_{N,j}$ follows the same argument.
\end{proof}

It remains to show that we obtain Type I and II power laws in our special 
case of the three-parameter beta process, which implies a particular rate 
measure $\nu$ in the Poisson process representation of the $\{q_{i}\}$. 
For the three-parameter beta process density in \eq{beta_density_three}, 
we have
\begin{eqnarray*}
	\nu_{1}[0,x] &=& \int_{\Psi \times (0,x]} u \; \nu_{BP}(d\psi,du) \\
		&=& \gamma \cdot \frac{ \Gamma(1+\theta)}{\Gamma(1-\alpha)\Gamma(\theta+\alpha)} \int_{0}^{x} u^{-\alpha} (1-u)^{\theta + \alpha - 1} \; du \\
		&\sim& \gamma \cdot \frac{ \Gamma(1+\theta)}{\Gamma(1-\alpha)\Gamma(\theta+\alpha)} \int_{0}^{x} u^{-\alpha} \; du, \quad x \downarrow 0 \\
		&=& \gamma \cdot \frac{ \Gamma(1+\theta)}{\Gamma(1-\alpha)\Gamma(\theta+\alpha)} \cdot \frac{1}{1-\alpha} x^{1-\alpha}.
\end{eqnarray*}
The final line is exactly the form required by \eq{vec_nu}
in \prop{abel_taub}, with $l(y)$ 
equal to the constant function of value
\begin{equation}
	\label{eq:C_val}
	C := \frac{\gamma}{\alpha} \cdot \frac{\Gamma(1+\theta)}{\Gamma(1-\alpha)\Gamma(\theta+\alpha)}.
\end{equation}

Then \prop{abel_taub} implies that the following power laws
hold for the mean of the Poissonized 
process:
\begin{eqnarray*}
	\Phi(t) &\stackrel{a.s.}{\sim}& \Gamma(1-\alpha) C t^{\alpha},
		\quad t \rightarrow \infty \\
	\Phi_{j}(t) &\stackrel{a.s.}{\sim}& \frac{\alpha \Gamma(j-\alpha)}{ j!} C t^{\alpha},
		\quad t \rightarrow \infty
		\quad (j > 1).
\end{eqnarray*}
\lem{poissonization} further yields
\begin{eqnarray*}
	\Phi_{N} &\stackrel{a.s.}{\sim}& \Gamma(1-\alpha) C N^{\alpha},
		\quad N \rightarrow \infty \\
	\Phi_{N,j} &\stackrel{a.s.}{\sim}& \frac{\alpha \Gamma(j-\alpha)}{ j!} C N^{\alpha},
		\quad N \rightarrow \infty
		\quad (j > 1),
\end{eqnarray*}
and finally \prop{as_mean} implies
\begin{eqnarray}
	\label{eq:power_total_beta}
	K_{N} &\stackrel{a.s.}{\sim}& \Gamma(1-\alpha) C N^{\alpha},
		\quad N \rightarrow \infty \\
	\label{eq:power_r_beta}
	K_{N,j} &\stackrel{a.s.}{\sim}& \frac{d \Gamma(j-\alpha)}{ j!} C N^{\alpha},
		\quad N \rightarrow \infty
		\quad (j > 1).
\end{eqnarray}
These are exactly the desired Type I and II power laws 
(\eqs{heaps_law} and \eqss{zipf_law}) for appropriate choices of the constants.

\subsection{Exponential decay in the number of features} \label{sec:exp_proof}

Next we consider a single data point and the number of features which
are expressed for that data point in the featural model. We prove 
results for the general case where the $i$th feature has probability 
$q_{i} \ge 0$ such that $\sum_{i} q_{i} < \infty$. Let $Z_{i}$ be 
a Bernoulli random variable with success probability $q_{i}$ and 
such that all the $Z_{i}$ are independent. Then $\mbe[\sum_{i} Z_{i}] 
= \sum_{i} q_{i} =: Q$. In this case, a
Chernoff bound~\citep{chernoff:1952:measure,hagerup:1990:guided} tells us 
that, for any $\delta > 0$, we have
$$
	\mbp[ \sum_{i} Z_{i} \ge (1+\delta) Q ]
		\le e^{\delta Q} (1+\delta)^{-(1+\delta) Q}.
$$
When $M$ is large enough such that $M > Q$, we can choose $\delta$ 
such that $(1+\delta) Q = M$. Then this inequality becomes
\begin{equation}
\label{eq:no_type_iii}
	\mbp[ \sum_{i} Z_{i} \ge M ]
		\le e^{M - Q} Q^{M} M^{-M} \quad \textrm{for $M > Q$}.
\end{equation}

We see from \eq{no_type_iii}
that the number of features $\sum_{i} Z_{i}$ that 
are expressed for a data point exhibits super-exponential tail 
decay and therefore cannot have a power law probability distribution 
when the sum of feature probabilities $\sum_{i} q_{i}$ is finite.
For comparison, let $Z \sim \pois(Q)$. Then~\citep{franceschetti:2007:closing}
$$
	\mbp[ Z \ge M ]
		\le e^{M - Q} Q^{M} M^{-M} \quad \textrm{for $M > Q$},
$$
the same tail bound as in \eq{no_type_iii}.

To apply the tail-behavior result of \eq{no_type_iii}
to the beta process (with two or three parameters),
we note that the total feature probability mass is finite by \eq{finite_mass_beta}.
Since the same set of feature probabilities is used in all subsequent Bernoulli
process draws for the beta-Bernoulli process, the result holds.

\section{Simulation} \label{sec:simulation}

To illustrate the three types of power laws discussed above,
we simulated beta process atom weights under three different
choices of the discount parameter $\alpha$, namely $\alpha=0$
(the classic, two-parameter beta process), $\alpha=0.3$, and
$\alpha=0.6$. In all three simulations, the remaining beta process
parameters were kept constant at total mass parameter
value $\gamma = 3$ and concentration parameter value $\theta = 1$.

The simulations were carried out using our extension of
the \citet{paisley:2010:stick} stick-breaking construction
in \eq{stick-breaking_two}. We generated 2,000
rounds of feature probabilities; that is, we generated 2,000 random
variables $C_{i}$ and $\sum_{i=1}^{2,000} C_{i}$ feature
probabilities. With these probabilities, we generated $N$ = 1,000
data points, i.e., 1,000 vectors of (2,000) independent Bernoulli random
variables with these probabilities. With these simulated data, we
were able to perform an empirical evaluation of our theoretical results.

\begin{figure}
\center

\includegraphics[width=0.46\textwidth]{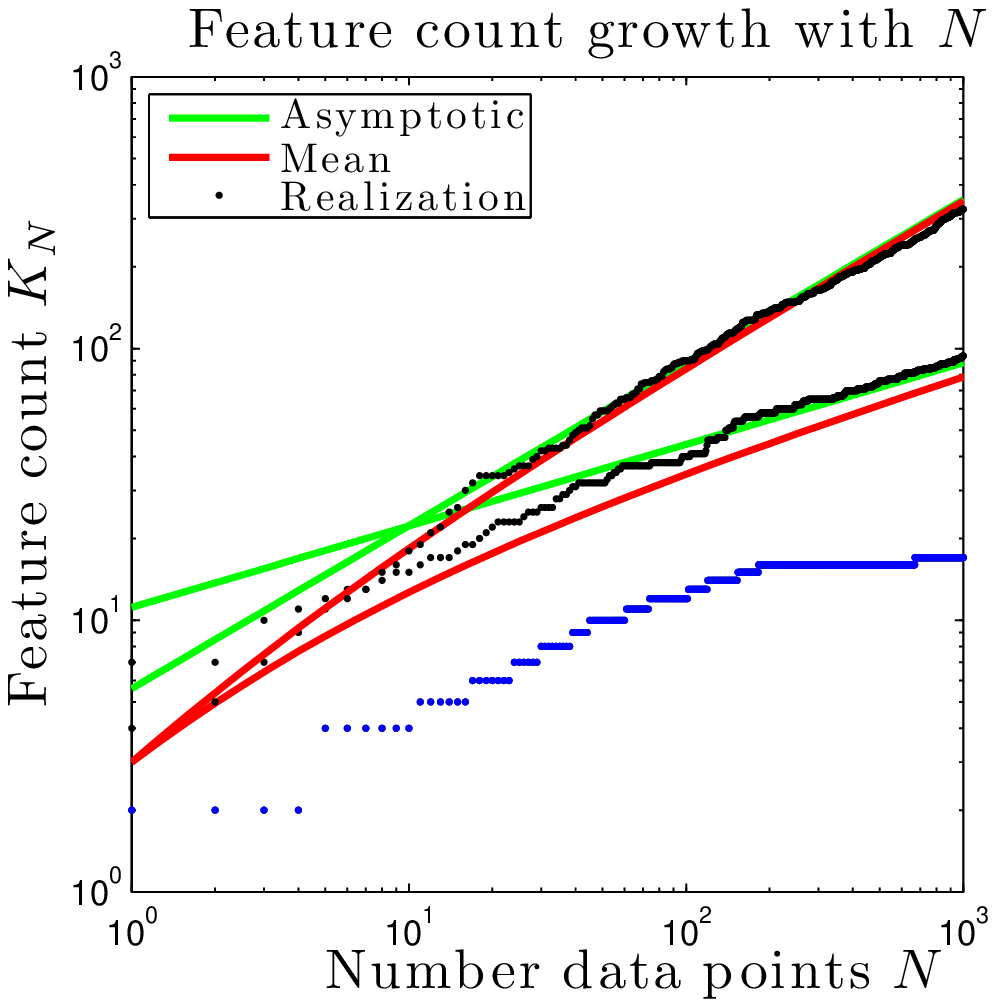}
\includegraphics[width=0.44\textwidth]{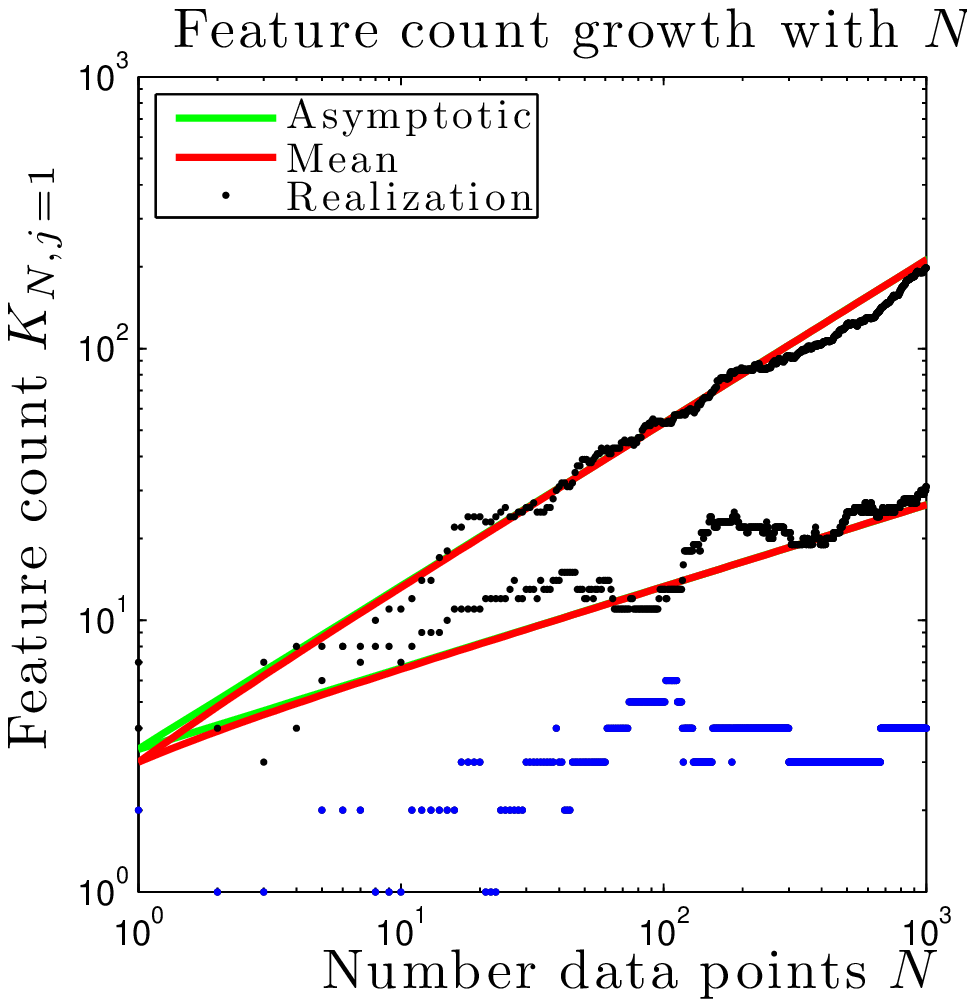}

\caption{\label{fig:sim_K} Growth in the number of represented
features $K_{N}$ ({\em left}) and the number of 
features represented by exactly one data point $K_{N,1}$ ({\em right})
as the total number of data points $N$ grows. The points in the scatterplot are derived
by simulation; blue for $\alpha=0$, center black is $\alpha=0.3$, and upper black for $\alpha=0.6$.
The red lines in the {\em left} plot show the theoretical mean $\Phi_{N}$ (\eq{phi_N});
in the {\em right} plot,
they show the theoretical mean $\Phi_{N,1}$ (\eq{phi_Nj}). The green lines
show the theoretical asymptotic behavior, \eq{power_total_beta} on the {\em left}
(Type I power law) and \eq{power_r_beta} on the {\em right} (Type II power law).
}
\end{figure}

\fig{sim_K} illustrates power laws in the number of represented
features $K_{N}$ on the left (Type I power law) and the number of
features represented by exactly one data point $K_{N,1}$ on the
right (Type II power law).  Both of these quantities are plotted as 
functions of the increasing number of data points $N$.  The blue
points show the simulated values for the classic, two-parameter beta process case
with $\alpha=0$. The center set of black points in each case corresponds to $\alpha=0.3$,
and the upper set of black points in each case corresponds to $\alpha=0.6$.

We also plot curves obtained from our theoretical results in order to
compare them to the simulation.  Recall that in our theoretical development,
we noted that there are two steps to establishing the asymptotic behavior of
$K_{N}$ and $K_{N,j}$ as $N$ increases. First, we compare the random quantities
$K_{N}$ and $K_{N,j}$ to their respective means, $\Phi_{N}$ and $\Phi_{N,j}$.
These means, as computed via numerical quadrature from \eq{phi_N} and
directly from \eq{phi_Nj}, are shown by red curves in the plots. Second,
we compare the means to their own asymptotic behavior. This asymptotic
behavior, which we ultimately proved was shared with the respective $K_{N}$
or $K_{N,j}$ in \eqs{power_total_beta} and \eqss{power_r_beta}, is
shown by green curves in the plots.

We can see in both plots that the $\alpha=0$ behavior is distinctly different 
from the straight-line behavior of the $\alpha>0$ examples. In both cases, we 
can see that any growth in $\alpha$ is slower than can be described by straight-line
growth. In particular, when $\alpha = 0$, the expected number of features is
\begin{equation}
	\label{eq:phi_2param}
	\phi_{N}
		= \mbe[K_{N}] = \mbe\left[ \sum_{n=1}^{N} \pois\left( \gamma \frac{\theta}{n + \theta} \right) \right] \\
		= \sum_{n=1}^{N} \gamma \frac{\theta}{n + \theta} \\
		\sim \gamma \theta \log(N).
\end{equation}
Similarly, when $\alpha = 0$, the expected number of features represented
by exactly one data point, $K_{N,1}$, is (by \eq{phi_Nj})
\begin{align*}
	\Phi_{N,1}
		&= \mbe[K_{N,1}] = \binom{N}{1} \int_{0}^{1} x^{1} (1-x)^{N-1} \cdot \theta x^{-1}(1-x)^{\theta-1} \; dx \\
		&= N \theta \cdot \frac{\Gamma(1) \Gamma(N-1+\theta)}{\Gamma(N+\theta)}
		= \theta \frac{N}{N-1+\theta}
		\sim \theta,
\end{align*}
where the second line follows from using the
normalization constant of the (proper) beta distribution.
Interestingly, while $K_{N,1}$ grows as a power
law when $\alpha > 0$, its expectation is constant
when $\alpha = 0$. While many new features are
instantiated as $N$ increases in the $\alpha=0$ case,
it seems that they are quickly represented by more
data points than just the first one.

\begin{figure}
\center

\includegraphics[width=0.43\textwidth]{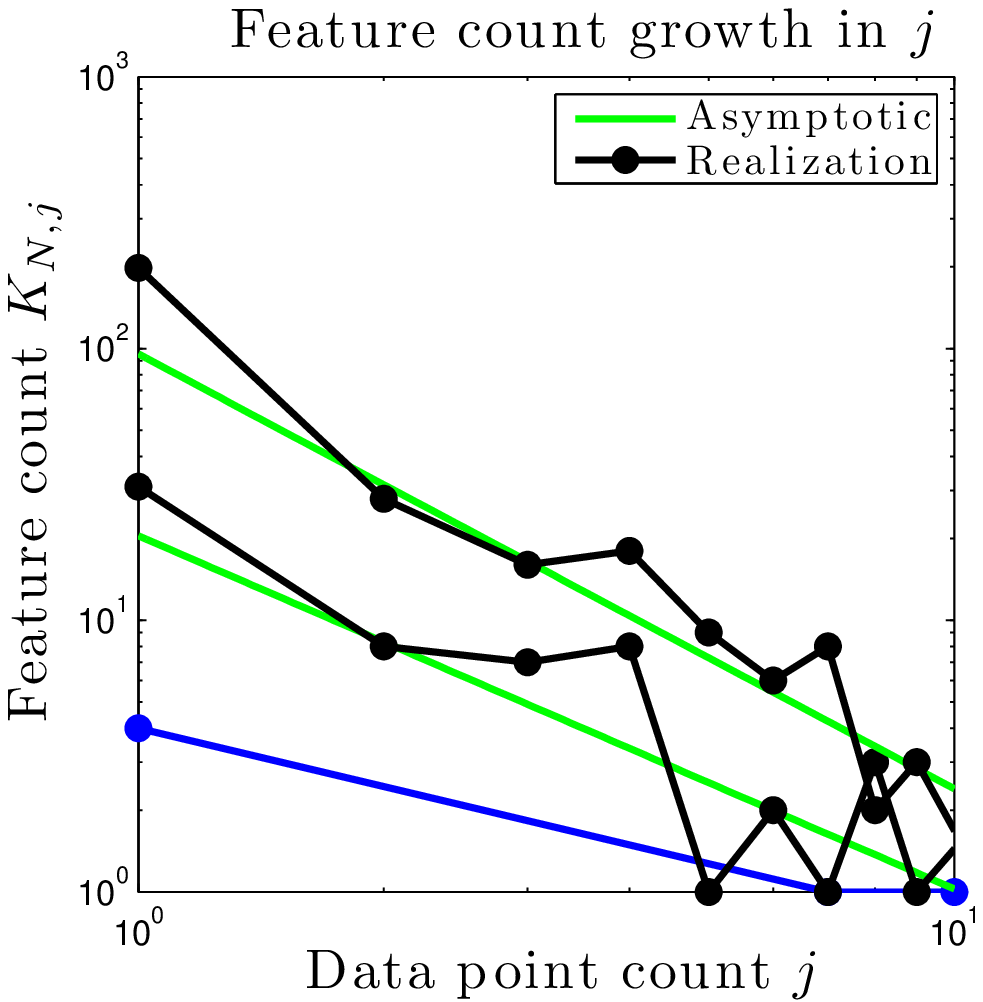}
\includegraphics[width=0.46\textwidth]{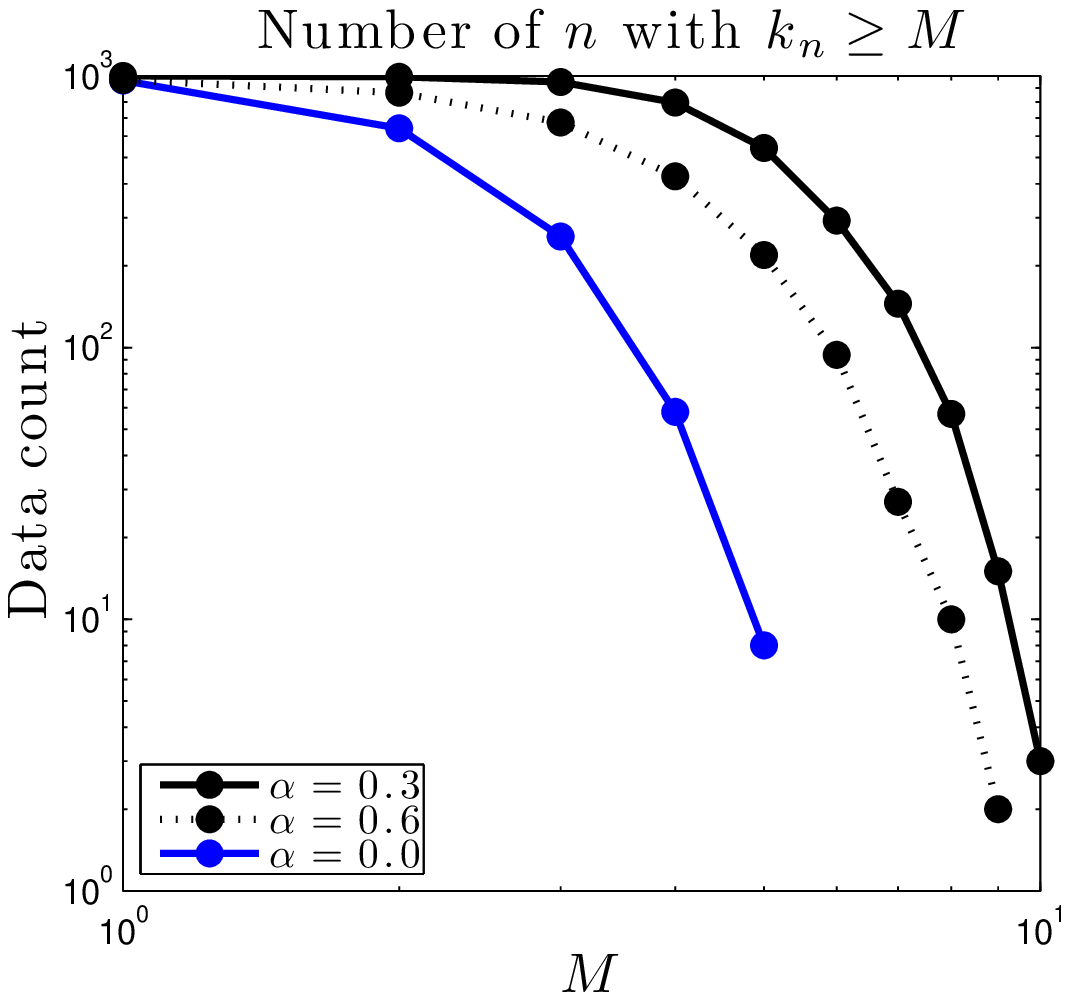}

\caption{\label{fig:sim_j_iii} {\em Left}: Change in the number of
features with exactly $j$ representatives among $N$ data points
for fixed $N$ as a function of $j$. The blue points, with connecting
lines, are for $\alpha=0$; middle black are for $\alpha=0.3$, upper black for
$\alpha=0.6$. The green lines show the theoretical asymptotic behavior
in $j$ (\eqs{zipf_law} and \eqss{asymp_j}) for the two $\alpha > 0$ cases. {\em Right}: 
Change in the number of data points, indexed by $n$, with number
of feature assignments $k_{n}$ greater than some positive, real-valued
$M$ as $M$ increases. Neither the $\alpha=0$ case (blue) nor the
$\alpha>0$ cases (black) exhibit Type III power laws.
}
\end{figure}

Type I and II power laws are somewhat easy to visualize since we 
have one point in our plots for each data point simulated. 
The behavior of $K_{N,j}$ as a function of $j$ for fixed $N$ and type III
power laws (or lack thereof) are somewhat more difficult to
visualize. In the case of $K_{N,j}$ as a function of $j$, we 
might expect that a large number of data points $N$ is
necessary to see many groups of size $j$ for $j$ much greater than
one. In the Type III case, we have seen that in fact power laws do
not hold for any value of $\alpha$ in the beta process. Rather, the number
of data points exhibiting more than $M$ features decreases more quickly in
$M$ than a power law would predict; 
therefore, we cannot plot many values of $M$ before this number effectively
goes to zero.

Nonetheless, \fig{sim_j_iii} compares our simulated data to
the approximation of \eq{zipf_law} with \eq{asymp_j} ({\em left})
and Type III power laws ({\em right}). On the left, blue points as usual denote
simulated data under $\alpha=0$; middle black points show $\alpha=0.3$, and
upper black points show $\alpha=0.6$. Here, we use connecting lines between 
plotted points to clarify $\alpha$ values. The green lines for the $\alpha > 0$ case
illustrate the approximation of \eq{asymp_j}. Around $j=10$, we see
that the number of feaures exhibited by $j$ data points, $K_{N,j}$, 
degenerates to mainly zero and one values. However, for smaller values
of $j$ we can still distinguish the power law trend.

On the right-hand side of \fig{sim_j_iii}, we display the number of data points
exhibiting more than $M$ features for various values of $M$ across the three
values of $\alpha$. Unlike the previous plots in \fig{sim_K} and \fig{sim_j_iii}, there
is no power-law behavior for the cases $\alpha > 0$, as predicted in
\mysec{exp_proof}. We also note that here the $\alpha=0.3$ curve
does not lie between the $\alpha=0$ and $\alpha=0.6$ curves.
Such an occurrence is not unusual in this case since, as we saw in
\eq{no_type_iii}, the rate of decrease is modulated by the total
mass of the feature probabilities drawn from the beta process, 
which is random and not necessarily smaller when $\alpha$ is smaller.

\begin{figure}
\center

\includegraphics[width=0.45\textwidth]{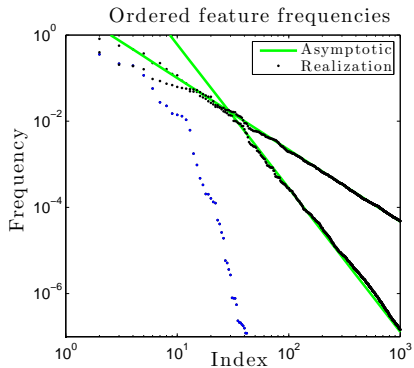}

\caption{\label{fig:sim_freqs} Feature probabilities from the beta
process plotted in decreasing size order. Blue points represent probabilities
from the $\alpha=0$ case; center black points show $\alpha=0.3$, and upper black points
show $\alpha=0.6$. The green lines show theoretical asymptotic behavior of the
ranked probabilities (\eq{theory_freqs}).
}
\end{figure}

Finally, since our experiment involves generating the underlying
feature probabilities from the beta process as well as the actual
feature assignments from repeated draws from the Bernoulli process, 
we may examine the feature probabilities themselves; see
\fig{sim_freqs}. As usual, the blue points represent the
classic, two-parameter ($\alpha=0$) beta process. Black points represent
$\alpha=0.3$ (center) and $\alpha=0.6$ (upper). Perhaps due to the fact 
that there is only the beta process noise to contend with in this aspect 
of the simulation (and not the combined randomness due to the beta process 
and Bernoulli process), we see the most striking demonstration of both 
power law behavior in the $\alpha>0$ cases and faster decay in the 
$\alpha=0$ case in this figure. The two $\alpha>0$ cases clearly adhere
to a power law that may be predicted from our results above and the
\citet{gnedin:2007:notes} results with $C$ as in \eq{C_val}:
\begin{equation}
	\label{eq:theory_freqs}
	\#\{i: q_{i} \ge x\} \stackrel{a.s.}{\sim} C x^{-\alpha} \quad x \downarrow 0.
\end{equation}
Note that ranking the probabilities merely inverts the plot that would be created
with $x$ on the horizontal axis and $\{i: q_{i} \ge x\}$ on the vertical axis.
The simulation demonstrates little noise about these power laws
beyond the 100th ranked probability.
The decay for $\alpha=0$ is markedly faster than the other cases.

\section{Experimental results} \label{sec:experiments}

We have seen that the Poisson process formulation allows for an easy 
extension of the beta process to a three-parameter model.  In this section
we study this model empirically in the setting of the modeling of
handwritten digits.  \citet{paisley:2010:stick} present results
for this problem using a two-parameter beta process coupled with
a discrete factor analysis model; we repeat those experiments with
the three-parameter beta process.  The data consists of 3,000 examples 
of handwritten digits, in particular 1,000 handwriting samples of 
each of the digits 3, 5, and 8 from the MNIST Handwritten Digits 
database~\citep{lecun:1998:mnist,roweis:2007:mnist}. Each handwritten 
digit is represented by a matrix of 28$\times$28 pixels; we project
these matrices into 50 dimensions using principal components 
analysis. Thus, our data takes the form $X \in \mathbb{R}^{50 \times 3000}$, 
and we may apply the beta process factor model from \eq{factor} with 
$P = 50$ and $N$ = 3,000 to discover latent structure in this data.

The generative model for $X$ that we use is
as follows~\citep[see][]{paisley:2010:stick}:
\begin{align}
        \nonumber
	X &= (W \circ Z) \Phi + E \\
        \nonumber
        Z &\sim \bpbep(N,\gamma,\theta,\alpha) \\
	\nonumber
	\Phi_{k,p} &\stackrel{iid}{\sim} N(0, \rho_{p}) \\
	\nonumber
        W_{n,k} &\stackrel{iid}{\sim} N(0, \zeta) \\
        \label{eq:gen_model}
        E_{n,p} &\stackrel{iid}{\sim} N(0, \eta),
\end{align}
with hyperparameters $\theta,\alpha,\gamma,B_{0},\{\rho_{p}\}_{p=1}^{P}, \zeta, \eta$.
Recall from \eq{factor} that $X \in \mathbb{R}^{N \times P}$ is the data,
$\Phi \in \mathbb{R}^{K \times P}$ is a matrix of factors,
and $E \in \mathbb{R}^{N \times P}$ is an error matrix. Here, we
introduce the weight matrix $W \in \mathbb{R}^{N \times K}$, which
modulates the binary factor loadings $Z \in \mathbb{R}^{N \times K}$.
In \eq{gen_model}, $\circ$ denotes elementwise multiplication, and the indices
have ranges $n \in \{1,\ldots,N\}, k \in \{1,\ldots,K\}, p \in \{1,\ldots,P\}$.
Since we draw $Z$ from a beta-Bernoulli process, the dimension $K$ is
theoretically infinite in the generative model notation of \eq{gen_model}.
However, we have seen that the number of columns of $Z$ with
nonzero entries is finite a.s. We use $K$ to denote this number.

We initialized both the two-parameter and the three-parameter models
with the same number of latent features, $K=200$, and the same values 
for all shared parameters (i.e., every variable except the new discount 
parameter $\alpha$).  We ran the experiment for 2,000 MCMC iterations, noting 
that the MCMC runs in both models seem to have reached equilibrium by 
500 iterations (see \figs{K} and \figss{hyper}).

\begin{figure}
\begin{multicols}{2}
\center

\includegraphics[width=0.45\textwidth]{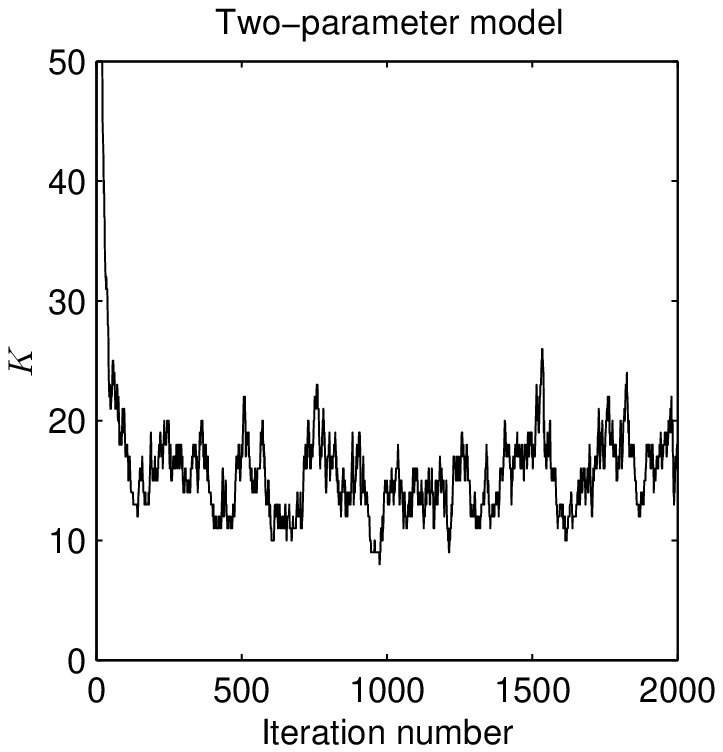}

\columnbreak

\includegraphics[width=0.45\textwidth]{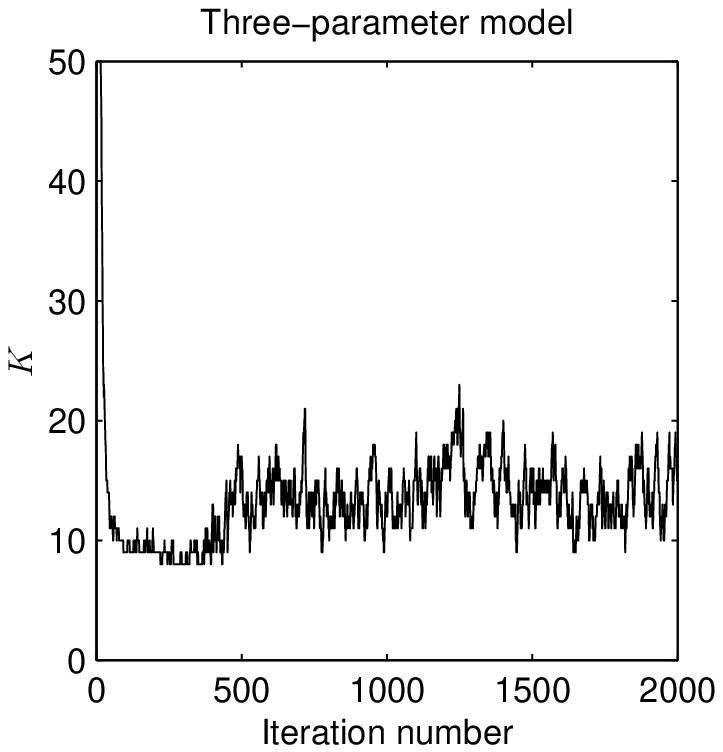}

\end{multicols}
\caption{\label{fig:K} The number of latent features $K$ as a function 
of the MCMC iteration. Results for the original, two-parameter model
are represented on the {\em left}, and results for the new, three-parameter 
model are illustrated on the {\em right}.}
\end{figure}

\begin{figure}
\begin{multicols}{3}
\center

\includegraphics[width=0.3\textwidth]{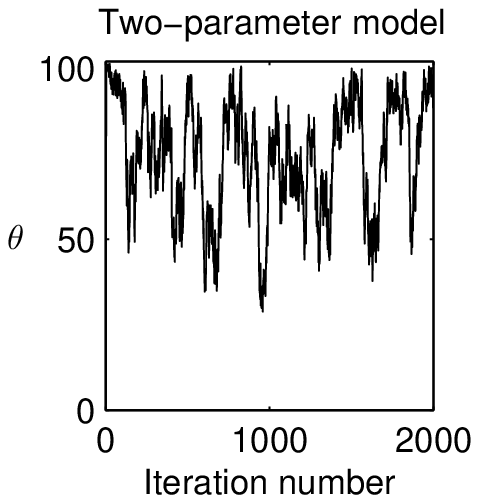}

\columnbreak

\includegraphics[width=0.3\textwidth]{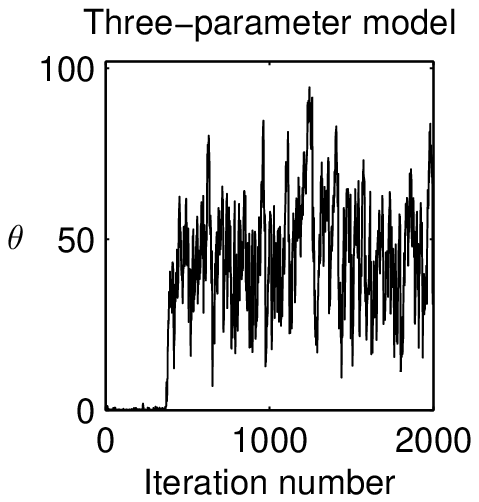}

\columnbreak

\includegraphics[width=0.3\textwidth]{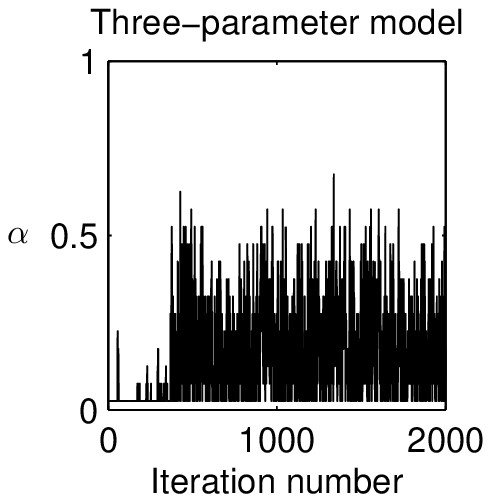}

\end{multicols}
\caption{\label{fig:hyper} The random values drawn for the hyperparameters 
as a function of the MCMC iteration. Draws for the concentration parameter 
$\theta$ under the two-parameter model are shown on the {\em left}, 
and draws for $\theta$ under the three-parameter model are shown in 
the {\em middle}.  On the \emph{right} are draws of the new discount 
parameter $\alpha$ under the three-parameter model.}
\end{figure}

\begin{figure}
\begin{multicols}{2}
\center

\includegraphics[width=0.45\textwidth]{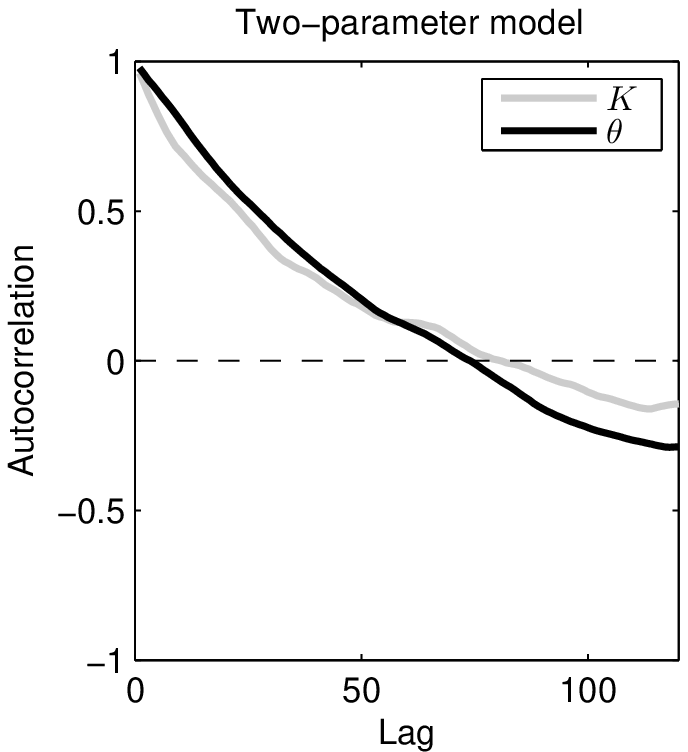}

\columnbreak

\includegraphics[width=0.45\textwidth]{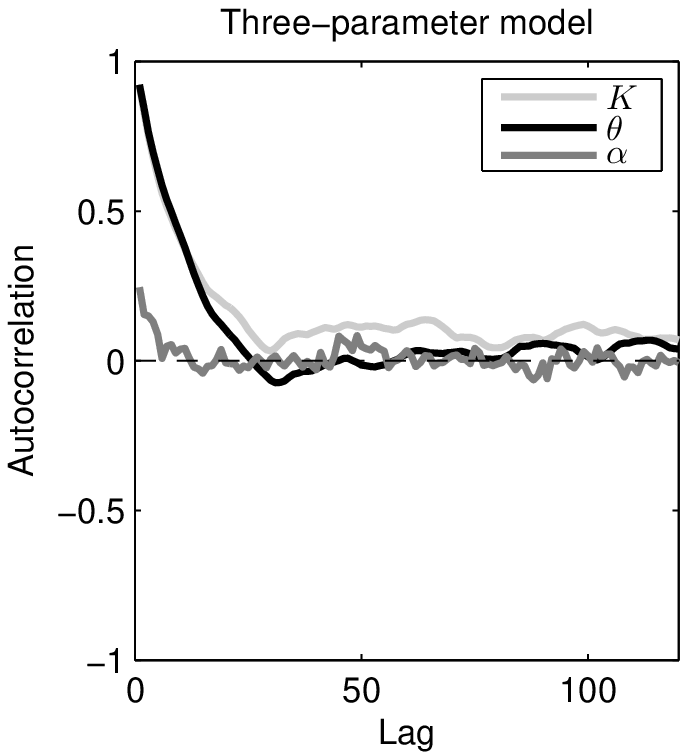}

\end{multicols}
\caption{\label{fig:acorr} Autocorrelation of the number of factors 
$K$, concentration parameter $\theta$, and discount parameter $\alpha$ 
for the MCMC samples after burn-in (where burn-in is taken to end at 
500 iterations) under the two-parameter model ({\em left}) and 
three-parameter model ({\em right}).} \end{figure}

\figs{K} and \figss{hyper} show the sampled values of 
various parameters as a function of MCMC iteration. In particular, 
we see how the number of features $K$ (\fig{K}), the 
concentration parameter $\theta$, and the discount parameter $\alpha$ 
(\fig{hyper}) change over time.  All three graphs 
illustrate that the three-parameter model takes a longer time 
to reach equilibrium than the two-parameter model (approximately 
500 iterations vs.\ approximatively 100 iterations).  However, 
once at equilibrium, the sampling time series associated with 
the three-parameter iterations exhibit lower autocorrelation 
than the samples associated with the two-parameter iterations 
(\fig{acorr}). In the implementation of both the original 
two-parameter model and the three-parameter model, the range for 
$\theta$ is considered to be bounded above by approximately 100 
for computational reasons (in accordance with the original 
methodology of~\citet{paisley:2010:stick}).  As shown in 
\fig{hyper}, this bound affects sampling in the 
two-parameter experiment whereas, after burn-in, the effect 
is not noticeable in the three-parameter experiment. 
While the discount parameter $\alpha$ also comes close to the 
lower boundary of its discretization (\fig{hyper})---which 
cannot be exactly zero due to computational concerns---the samples 
nonetheless seem to explore the space well.

We can see from \fig{acorr} that the estimated value of the concentraton
parameter $\theta$ is much lower when the discount parameter $\alpha$
is also estimated. This behavior may be seen to result from the fact that
the power law growth of the expected number of represented features $\Phi_{N}$
in the $\alpha > 0$ case yields a generally higher
expected number of features than in the $\alpha = 0$ case for a
fixed concentration parameter $\theta$. Further, we see from \eq{phi_2param}
that the expected number of features when $\alpha = 0$ is linear in $\theta$.
Therefore, if we instead fix the number of features,
the $\alpha = 0$ model can compensate by increasing $\theta$ over
the $\alpha > 0$ model. Indeed, we see in \fig{K} that the number of features
discovered by both models is roughly equal; in order to achieve this number
of features, the $\alpha = 0$ model seems to be compensating by overestimating 
the concentration parameter $\theta$.

\begin{figure}
\begin{center} 

Two-parameter model

\includegraphics[width= 0.1\textwidth]{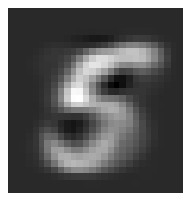}
\includegraphics[width= 0.1\textwidth]{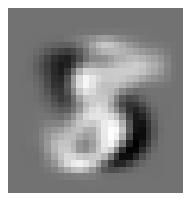}
\includegraphics[width= 0.1\textwidth]{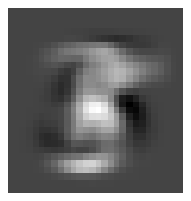}
\includegraphics[width= 0.1\textwidth]{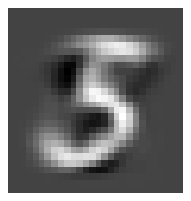}
\includegraphics[width= 0.1\textwidth]{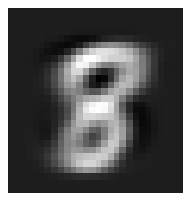}
\includegraphics[width= 0.1\textwidth]{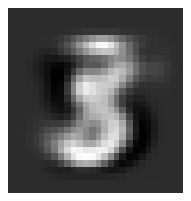}
\includegraphics[width= 0.1\textwidth]{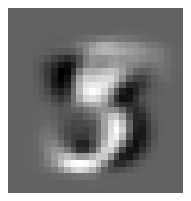}
\includegraphics[width= 0.1\textwidth]{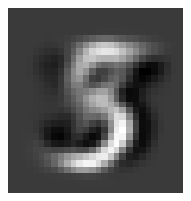}
\includegraphics[width= 0.1\textwidth]{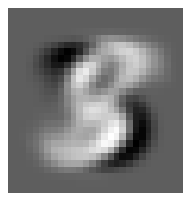}

Three-parameter model

\includegraphics[width= 0.1\textwidth]{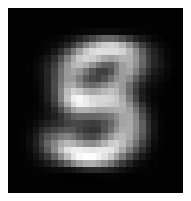}
\includegraphics[width= 0.1\textwidth]{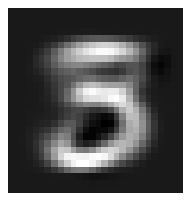}
\includegraphics[width= 0.1\textwidth]{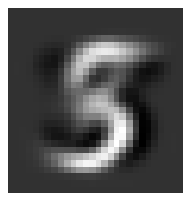}
\includegraphics[width= 0.1\textwidth]{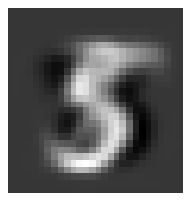}
\includegraphics[width= 0.1\textwidth]{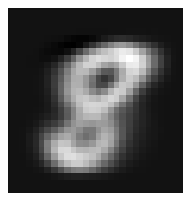}
\includegraphics[width= 0.1\textwidth]{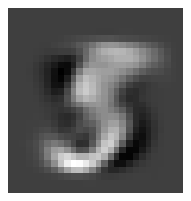}
\includegraphics[width= 0.1\textwidth]{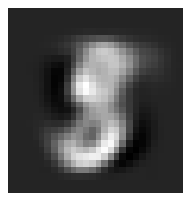}
\includegraphics[width= 0.1\textwidth]{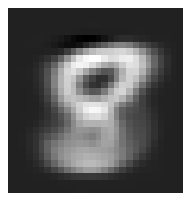}
\includegraphics[width= 0.1\textwidth]{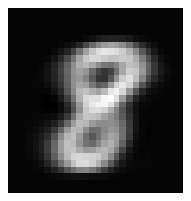}
\end{center}

\caption{\label{fig:factor} {\em Upper}: The top nine features by 
sampled representation across the data set on the final MCMC 
iteration for the original, two-parameter model. 
{\em Lower}: The top nine features determined in the same 
way for the new, three-parameter model.}
\end{figure}

To get a sense of the actual output of the model, we can look 
at some of the learned features. In particular, we collected 
the set of features from the last MCMC iteration in each model. 
The $k$th feature is expressed or not for the $n$th data point 
according to whether $Z_{nk}$ is one or zero. Therefore, we 
can find the most-expressed features across the data set using 
the set of features on this iteration as well as the sampled 
$Z$ matrix on this iteration.  We plot the nine most-expressed
features under each model in \fig{factor}. In both 
models, we can see how the features have captured distinguishing 
features of the 3, 5, and 8 digits.

Finally, we note that the three-parameter version of the algorithm 
is competitive with the two-parameter version in running time once 
equilibrium is reached. After the burn-in regime of 500 iterations, 
the average running time per iteration under the three-parameter 
model is 14.5 seconds, compared with 11.7 seconds average running 
time per iteration under the two-parameter model.

\section{Conclusions} \label{sec:conclusion}

We have shown that the stick-breaking representation of the
beta process due to~\citet{paisley:2010:stick} can be obtained
directly from the representation of the beta process as a 
completely random measure.  With this result in hand the
set of connections between the beta process, stick-breaking,
and the Indian buffet process are essentially as complete as
those linking the Dirichlet process, stick-breaking, and the
Chinese restaurant process.

We have also shown that this approach motivates a three-parameter 
generalization of the stick-breaking representation of~\citet{paisley:2010:stick}, 
which is the analog of the Pitman-Yor generalization of the stick-breaking 
representation for the Dirichlet process.  We have shown that Type I
and Type II power laws follow from this three-parameter model.
We have also shown that Type III power laws cannot be obtained
within this framework.  It is an open problem to discover useful
classes of stochastic processes that provide such power laws.

\section{Acknowledgments} \label{sec:acknowledge}

We wish to thank Alexander Gnedin for useful discussions
and Lancelot James for helpful suggestions.
We also thank John Paisley for useful discussions and for kindly 
providing access to his code, which we used in our experimental work.  
Tamara Broderick was funded by a National Science Foundation 
Graduate Research Fellowship.  Michael Jordan was supported in
part by IARPA-BAA-09-10, ``Knowledge Discovery and Dissemination.''  
Jim Pitman was supported in part by 
the National Science Foundation Award 0806118 ``Combinatorial 
Stochastic Processes.'' 

\appendix

\section{A Markov chain Monte Carlo algorithm} \label{app:inference}

Posterior inference under the three-parameter model can be performed with a 
Markov chain Monte Carlo (MCMC) algorithm. Many conditionals have simple forms 
that allow Gibbs sampling although others require further approximation. Most of 
our sampling steps are as in~\citet{paisley:2010:stick} with the
notable exceptions of a new sampling step for the discount parameter $\alpha$
and integration of the discount parameter $\alpha$
into the existing framework. We describe the full algorithm here.

\subsection{Notation and auxiliary variables}

Call the index $i$ in \eq{stick-breaking_two} the {\em round}. Then introduce the round-indicator variables $r_{k}$ such that $r_{k} = i$ exactly when the $k$th atom, where $k$ indexes the sequence $(\psi_{1,1},\ldots,\psi_{1,C_{1}},\psi_{2,1},\ldots,\psi_{2,C_{2}},\ldots)$, occurs in round $i$. We may write
$$
r_{k} := 1 + \sum_{i=1}^{\infty} \mathbbm{1}\left\{ \sum_{j=1}^{i} C_{j} < k \right\}.
$$
To recover the round lengths $C$ from $r = (r_1, r_2, \ldots)$, note that
\begin{equation}
	\label{eq:round_lengths_from_indicators} 
	C_{i} = \sum_{k=1}^{\infty} \mathbbm{1}(r_{k} = i).
\end{equation}

With the definition of the round indicators $r$ in hand, we can rewrite the beta process $B$ as
$$
	B = \sum_{k=1}^{\infty} V_{k,r_{k}} \prod_{j=1}^{r_{k}} (1-V_{k,j}) \delta_{\psi_{k}},
$$
where $V_{k,j} \stackrel{iid}{\sim} \tb(1-\alpha,\theta+i\alpha)$ and $\psi_{k} \stackrel{iid}{\sim} \gamma^{-1} B_{0}$ as usual although the indexing is not the same as in \eq{stick-breaking_two}.
It follows that the expression of the $k$th feature for the $n$th data point is
given by
$$
	Z_{n,k} \sim \bern\left(\pi_{k} \right), \quad
	\pi_{k} := V_{k,r_{k}} \prod_{j=1}^{r_{k}-1}(1-V_{k,j}).
$$

We also introduce notation for the number of data points in which the $k$th feature is, respectively, expressed and not expressed:
$$
	m_{1,k} := \sum_{n=1}^{N} \mathbbm{1}(Z_{n,k}=1), \quad
	m_{0,k} := \sum_{n=1}^{N} \mathbbm{1}(Z_{n,k}=0)
$$
Finally, let $K$ be the number of represented features; i.e., $K := \#\{k: m_{1,k} > 0\}$. Without loss of generality, we assume the represented features are the first $K$ features in the index $k$. The new quantities $\{r_{k}\}$, $\{m_{1,k}\}$, $\{m_{0,k}\}$, and $K$ will be used in describing the sampler steps below.

\subsection{Latent indicators}

First, we describe the sampling of the round indicators $\{r_{k}\}$ and the latent feature indicators $\{Z_{n,k}\}$. In these and other steps in the MCMC algorithm, we integrate out the stick-breaking proportions $\{V_{i}\}$.

\subsubsection{Round indicator variables} \label{app:round_indicator}

We wish to sample the round indicator $r_{k}$ for each feature $k$ with $1 \le k \le K$. We can write the conditional for $r_{k}$ as
\begin{eqnarray}
	\nonumber
	\lefteqn{ p(r_{k} = i | \{r_{l}\}_{l=1}^{k-1}, \{Z_{n,k}\}_{n=1}^{N}, \theta, \alpha, \gamma) } \\
		\label{eq:sample_round_indicator}
		&\propto& p(\{Z_{n,k}\}_{n=1}^{N} | r_{k} = i, \theta, \alpha) p(r_{k} = i | \{r_{l}\}_{l=1}^{k-1}).
\end{eqnarray}
It remains to calculate the two factors in the product.

For the first factor in \eq{sample_round_indicator}, we write out the integration over stick-breaking proportions and approximate with a Monte Carlo integral:
\begin{eqnarray}
	\nonumber
	p(\{Z_{n,k}\}_{n=1}^{N} | r_{k} = i, \theta, \alpha)
		&=& \int_{[0,1]^{i}} \pi_{k}^{m_{1,k}} (1-\pi_{k})^{m_{0,k}} \; dV \\
		\label{eq:mc_int_sticks}
		&\approx& \frac{1}{S} \sum_{s=1}^{S} (\pi_{k}^{(s)})^{m_{1,k}} (1-\pi_{k}^{(s)})^{m_{0,k}}.
\end{eqnarray}
Here, $\pi^{(s)}_{k} := V^{(s)}_{k,r_{k}} \prod_{j=1}^{r_{k}-1}(1-V^{(s)}_{k,j})$, and $V_{k,j}^{(s)} \stackrel{indep}{\sim} \tb(1-\alpha,\theta + j\alpha)$. Also, $S$ is the number of samples in the sum approximation. Note that the computational trick employed in~\citet{paisley:2010:stick} for sampling the $\{V_{i}\}$ relies on the first parameter of the beta distribution being equal to one; therefore, the sampling described above, without further tricks, is exactly the sampling that must be used in this more general parameterization. 

For the second factor in \eq{sample_round_indicator}, there is no dependence on the $\alpha$ parameter, so the draws are the same as in~\citet{paisley:2010:stick}. For $R_{k} := \sum_{j=1}^{k} \mathbbm{1}(r_{j} = r_{k})$, we have
\begin{eqnarray*}
	\lefteqn{ p(r_{k} = r | \gamma, \{r_{l}\}_{l=1}^{k-1}) } \\
		&=& \left\{ \begin{array}{ll}
			0 & r < r_{k-1} \\
			\frac{1 - \sum_{i=1}^{R_{k-1}} \pois(i | \gamma)}{
				1 - \sum_{i=1}^{R_{k-1}-1} \pois(i | \gamma)}
				& r = r_{k-1} \\
			\left( 1 - \frac{1 - \sum_{i=1}^{R_{k-1}} \pois(i | \gamma)}{
				1 - \sum_{i=1}^{R_{k-1}-1} \pois(i | \gamma)} \right)
				\left( 1 - \pois(0 | \gamma) \right) \pois( 0 | \gamma )^{h-1}
				& r = r_{k-1} + h
			\end{array} \right.
\end{eqnarray*}
for each $h \ge 1$. Note that these draws make the approximation
that the first $K$ features correspond
to the first $K$ tuples $(i,j)$ in the double sum of \eq{stick-breaking_two};
these orderings do not in general agree.

To complete the calculation of the posterior for $r_{k}$, we need to sum over all values of $i$ to normalize $p(r_{k} = i | \{r_{l}\}_{l=1}^{k-1}, \{Z_{n,k}\}_{n=1}^{N}, \theta, \alpha, \gamma)$. Since this is not computationally feasible, an alternative method is to calculate \eq{sample_round_indicator} for increasing values of $i$ until the result falls below a pre-determined threshold.

\subsubsection{Factor indicators}

In finding the posterior for the $k$th feature indicator in the $n$th latent 
factor, $Z_{n,k}$, we can integrate out both $\{V_{i}\}$ and the weight variables 
$\{W_{n,k}\}$. The conditional for $Z_{n,k}$ is
\begin{eqnarray}
	\nonumber
	\lefteqn{ p(Z_{n,k} | X_{n,\cdot}, \Phi, Z_{n,-k}, r, \theta, \alpha, \eta, \zeta) } \\
		\label{eq:factor_indicator}
		&=& p(X_{n,\cdot} | Z_{n,\cdot}, \Phi, \eta, \zeta) p(Z_{n,k} | r, \theta, \alpha, Z_{n,-k}).
\end{eqnarray}

First, we consider the likelihood. For this factor, we integrate out 
$W$ explicitly:
\begin{eqnarray*}
	\lefteqn{ p(X_{n,\cdot} | Z_{n,\cdot}, \Phi, \eta, \zeta) } \\
		&=& \int_{W} p(X_{n,\cdot} | Z_{n,\cdot}, \Phi, W, \eta) p(W | \zeta) \\
		&=& \int_{W_{n,I}} N(X_{n,\cdot} | W_{n,I} \Phi_{I,\cdot}, \eta I_{P} ) N(W_{n,I} | 0_{|I|}, \zeta I_{|I|}) dW_{n,I} \\
		&& \textrm{where $I = \{i: Z_{n,i} = 1\}$} \\
		&=& N\left(X_{n,\cdot} | 0_{P}, \left[  \eta^{-1} I_{P} - \eta^{-2} \Phi_{I,\cdot} \left( \eta^{-1} \Phi_{I,\cdot}^{\top} \Phi_{I,\cdot} + \zeta^{-1} I_{|I|} \right)^{-1} \Phi_{I,\cdot}^{\top} \right]^{-1} \right) \\
		&=& N\left(X_{n,\cdot} | 0_{P}, \eta I_{P} + \zeta \Phi_{I,\cdot} \Phi_{I,\cdot}^{\top} \right),
\end{eqnarray*}
where the final step follows from the Sherman-Morrison-Woodbury lemma.

For the second factor in \eq{factor_indicator}, we can write
\begin{eqnarray*}
	p(Z_{n,k} | r, \theta, \alpha, Z_{n,-k})
		&=& \frac{p(Z_{n} | r, \theta, \alpha)}{p(Z_{n,-k} | r, \theta, \alpha)},
\end{eqnarray*}
and the numerator and denominator can both be estimated as integrals over 
$V$ using the same Monte Carlo integration trick as in \eq{mc_int_sticks}.

\subsection{Hyperparameters}

Next, we describe sampling for the three parameters of the beta process. 
The mass and concentration parameters are shared by the two-parameter process; 
the discount parameter is unique to the three-parameter beta process.

\subsubsection{Mass parameter}

With the round indicators $\{r_{k}\}$ in hand as from \app{round_indicator} above, we can recover the round lengths $\{C_{i}\}$ with \eq{round_lengths_from_indicators}. Assuming an improper gamma prior on $\gamma$---with both shape and inverse scale parameters equal to zero---and recalling the iid Poisson generation of the $\{C_{i}\}$, the posterior for $\gamma$ is
$$
	p(\gamma | r, Z, \theta, \alpha)
		= \ga(\gamma | \sum_{i=1}^{r_{K}} C_{i}, r_{K}).
$$
Note that it is necessary to sample $\gamma$ since it occurs in, e.g., the 
conditional for the round indicator variables (\app{round_indicator}).

\subsubsection{Concentration parameter}

The conditional for $\theta$ is
\begin{equation*}
	p(\theta | Z, r, \alpha)
		\propto p(\theta) \prod_{k=1}^{K} p(Z | r, \theta, \alpha).
\end{equation*}

Again, we calculate the likelihood factors $p(Z | r, \theta, \alpha)$ with a Monte Carlo approximation as in \eq{mc_int_sticks}. In order to find the conditional over $\theta$ from the likelihood and prior, we further approximate the space of $\theta > 0$ by a discretization around the previous value of $\theta$ in the Monte Carlo sampler: $\{\theta_{prev} + t \Delta \theta\}_{t=S}^{t=T}$, where $S$ and $T$ are chosen so that all potential new $\theta$ values are nonnegative and so that the tails of the distribution fall below a pre-determined threshold. To complete the description, we choose the improper prior $p(\theta) \propto 1$.

\subsubsection{Discount parameter}

We sample the discount parameter $\alpha$ in a similar manner to $\theta$. The conditional for $\alpha$ is
\begin{equation*}
	p(\alpha | Z, r, \theta)
		\propto p(\alpha) \prod_{k=1}^{K} p(Z | r, \theta, \alpha).
\end{equation*}
As usual, we calculate the likelihood factors $p(Z | r, \theta, \alpha)$ with a Monte Carlo approximation as in \eq{mc_int_sticks}. While we discretize the sampling of $\alpha$ as we did for $\theta$, note that sampling $\alpha$ is more straightforward since $\alpha$ must lie in $[0,1]$. Therefore, the choice of $\Delta \alpha$ completely characterizes the discretization of the interval. In particular, to avoid endpoint behavior, we consider new values of $\alpha$ among $\{\Delta \alpha / 2 + t \Delta \alpha\}_{t=0}^{(\Delta \alpha)^{-1} - 1}$. Moreover, the choice of $p(\alpha) \propto 1$ is, in this case, a proper prior for $\alpha$.

\subsection{Factor analysis components}

In order to use the beta process as a prior in the factor analysis model 
described in \eq{factor}, we must also describe samplers for the feature 
matrix $\Phi$ and weight matrix $W$.

\subsubsection{Feature matrix}

The conditional for the feature matrix $\Phi$ is
\begin{eqnarray*}
	p(\Phi_{\cdot,p} | X, W, Z, \eta, \rho_{p})
		&\propto& p(X_{\cdot,p} | \Phi_{\cdot,p}, W, Z, \eta I_{N}) p(\Phi_{\cdot,p} | \rho_{p}) \\
		&=& N(X_{\cdot,p} | (W \circ Z) \Phi_{\cdot,p}, \eta I_{N}) N(\Phi_{\cdot,p} | 0_{K}, \rho_{p} I_{K}) \\
		&\propto& N\left(\Phi_{\cdot,p} | \mu, \Sigma \right),
\end{eqnarray*}
where, in the final line, the variance is defined as follows:
$$
	\Sigma := \left( \eta^{-1} (W \circ Z)^{\top} (W \circ Z) + \rho_{p}^{-1} I_{K} \right)^{-1},
$$
and similarly for the mean:
$$
	\mu := \Sigma \eta^{-1} (W \circ Z)^{\top} X_{\cdot,p}.
$$

\subsubsection{Weight matrix}

Let $I = \{i: Z_{n,i} = 1\}$. Then the conditional for the weight matrix $W$ is
\begin{eqnarray*}
	p(W_{n,I} | X, Z, \Phi, \eta)
		&\propto& p(X_{n,\cdot} | \Phi_{I,\cdot}, W_{n,I}, \eta) p(W_{n,I} | \zeta) \\
		&=& N(X_{n,\cdot} | W_{n,I} \Phi_{I,\cdot}, \eta I_{p}) N(W_{n,I} | 0_{|I|}, \zeta I_{|I|}) \\
		&\propto& N(W_{n,I} | \tilde{\mu}, \tilde{\Sigma}),
\end{eqnarray*}
where, in the final line, the variance is defined as $\tilde{\Sigma} := \left( \eta^{-1} \Phi_{I,\cdot} \Phi_{I,\cdot}^{\top} + \zeta^{-1} I_{|I|} \right)^{-1}$, and the mean is defined as $\tilde{\mu} := \tilde{\Sigma} \eta^{-1} X_{n,\cdot} \Phi_{I,\cdot}^{\top}$.

\bibliographystyle{plainnat}
\bibliography{beta}

\end{document}